\documentclass[11pt]{article}

\usepackage[margin=1in]{geometry}
\usepackage{amsmath,amsfonts,amssymb,amsthm}
\usepackage{graphicx}
\usepackage{enumerate}
\usepackage{bbm}
\usepackage{verbatim}
\usepackage{hyperref,color}
\usepackage[capitalize,nameinlink]{cleveref}
\usepackage[dvipsnames]{xcolor}
\hypersetup{
	colorlinks=true,
	pdfpagemode=UseNone,
    citecolor=OliveGreen,
    linkcolor=NavyBlue,
    urlcolor=Magenta,
	pdfstartview=FitW
}
\usepackage{appendix}
\crefname{appsec}{Appendix}{Appendices}
\usepackage{tikz}
\usepackage{subcaption}
\usetikzlibrary{calc}

\usepackage{afterpage}
\usepackage{array}
\usepackage{algorithm2e}
\SetKwComment{Comment}{/* }{ */}
\usepackage{multicol}

\usepackage{newtxtext}

\theoremstyle{plain}
\newtheorem{thm}{Theorem}[section]
\newtheorem{theorem}[thm]{Theorem}

\newtheorem{lemma}[thm]{Lemma}

\theoremstyle{definition}

\newtheorem*{ass*}{Assumption}
\newtheorem*{assumption*}{Assumption}

\theoremstyle{remark}

\newtheorem{remark}[thm]{Remark}

\crefname{lem}{Lemma}{Lemmas}
\crefname{lemma}{Lemma}{Lemmas}
\crefname{thm}{Theorem}{Theorems}
\crefname{theorem}{Theorem}{Theorems}
\crefname{defn}{Definition}{Definitions}
\crefname{definition}{Definition}{Definitions}
\crefname{fact}{Fact}{Facts}
\crefname{fact}{Fact}{Facts}
\crefname{clm}{Claim}{Claims}
\crefname{claim}{Claim}{Claims}
\crefname{prop}{Proposition}{Propositions}
\crefname{proposition}{Proposition}{Propositions}
\crefname{algocf}{Algorithm}{Algorithms}

\newcommand{\E}{\mathbb{E}}

\renewcommand{\nl}{{n,\lambda}}

\begin{document}

\title{On weighted partial triangulations of convex polygons}

 \author{
 	Antonio Blanca\thanks{Department of Computer Science and Engineering, Pennsylvania State University. Email: ablanca@cse.psu.edu. Research supported in part by NSF CAREER grant 2143762.}
 	\and
 	Alexandre Stauffer\thanks{Department of Mathematics, King's College London. Email: a.stauffer@kcl.ac.uk.} 
 	\and 
 	Izabella Stuhl\thanks{Department of Mathematics, Pennsylvania State University. Email: ius68@psu.edu. } 	
 }
\date{\today}
            
\maketitle

\begin{abstract}
We study the problem of sampling weighted partial triangulations of a convex polygon with $n+2$ sides.
We consider the distribution $\pi_\nl$ under which each
partial triangulation $\sigma$ is assigned probability proportional to $\lambda^{|\sigma|}$, where $\lambda>0$ is a model parameter and $|\sigma| \in \{0,\dots,n-1\}$ denotes the number of diagonals in $\sigma$. 
This model belongs to a broad class of weighted geometric partition problems that include lattice triangulations and dyadic tilings,
and is closely related to 
several classical combinatorial structures, including 
the full triangulations of a convex polygon and the associated Catalan structures.
Our main result is a simple exact sampling algorithm for $\pi_\nl$ with expected running time $O\big((\min\{n,n\sqrt{\lambda}\}+1)\log n\big)$, which is optimal up to the logarithmic factor.
\end{abstract}

\vfill

\thispagestyle{empty}

\pagebreak

\setcounter{page}{1} 
\section{Introduction}

We consider the problem of sampling weighted partial triangulations of a convex polygon. 
Let $\Omega_{n,k}$ denote the set of partitions of a convex polygon with exactly $n+2$ sides into $k$ parts obtained by inserting exactly $k-1$ non-intersecting diagonals; see Figures~\ref{fig:octagon-full}-\ref{fig:octagon-partialb}.
Given a real parameter $\lambda >0$, let $\pi_\nl$ be the distribution defined over 
$\Omega_n = \bigcup_{k=1}^n \Omega_{n,k}$ where each partial triangulation $\sigma \in \Omega_n$ is assigned probability
$$
\pi_\nl(\sigma) = \frac{\lambda^{|\sigma|}}{Z_\nl},
$$
where $|\sigma|$ denotes the number of diagonals in $\sigma$ and $Z_\nl = \sum_{\tau \in \Omega_n} \lambda^{|\tau|}$ is the corresponding normalizing constant or partition function.

Partial triangulations belong to a broad class of geometric partition models which have a long history in combinatorics, computational geometry, and computer graphics. Classical triangulation models in this class include full and fixed-size triangulations of convex polygons, subdivisions and triangulations of arbitrary planar point sets, regular and Delaunay triangulations defined through various geometric lifting procedures, lattice triangulations, and triangulations of nonconvex polygonal regions or regions with holes. 
Other geometric partition models include mixed subdivisions, as well as dyadic, zonotopal, and cubical tilings; see the textbooks~\cite{lee2017subdivisions,de2010triangulations} for extensive background.

Across these models, sampling from uniform or weighted distributions 
over the corresponding partition spaces is a fundamental computational primitive
that is used both as a subroutine to find configurations that optimize a prescribed weight function and to investigate typical properties of random partitions. The weighted sampling problem has been studied, for example, in the context of lattice triangulations, dyadic tilings, and the full triangulations of convex polygons.
Lattice triangulations are triangulations of the integer points contained in a polygon in $\mathbb{R}^2$ whose vertices are also integer points; see Figure~\ref{fig:lattice}.
They are of interest as geometric structures in their own right and because of their connections to plane algebraic curves; see~\cite{ya1989real,dais2002resolving,gelfand1994discriminants,de2010triangulations}.
Motivated by questions about the typical geometry of random lattice triangulations,~\cite{caputo2013random,10.1214/16-EJP4321,stauffer2017lyapunov} studied Markov chains for the weighted sampling problem 
in which each triangulation $\tau$ is assigned probability $\propto \lambda^{|\tau|}$, with $|\tau|$ denoting the total edge length in $\tau$.

An analogous edge-length weighted sampling problem has been considered for dyadic tilings, which are rectangular dissections of an $n \times n$ lattice region into rectangles of area $n$ subject to dyadic restrictions on their dimensions and positions; see Figure~\ref{fig:dyadic} as well as~\cite{cannon2014phase,cannon2017polynomial,lagarias2002counting,angel2014phase}.
In this setting,~\cite{cannon2014phase} studied a Markov chain approach to sample from the distribution in which each dyadic tiling $\tau$ is assigned probability $\propto \lambda^{|\tau|}$, with $|\tau|$ corresponding to the total edge length in $\tau$.

Closer to the setting of this paper, uniform sampling has been studied extensively for the set of full triangulations of a convex polygon, which corresponds to $\Omega_{n,n}$ in our notation.
The number of triangulations in $\Omega_{n,n}$ is $C_n$, the $n$-th Catalan number, and 
$\Omega_{n,n}$ admits bijections with many classical Catalan structures including full binary trees with $n$ internal nodes, balanced parenthesizations, Dyck paths, among others~\cite{Stanley_2015}.
As such, sampling full triangulations uniformly at random is a relatively straightforward task, as one could, for example, use R{\'e}my's algorithm~\cite{remy1985procede} to sample a full binary tree with exactly $n$ internal vertices uniformly at random, and then utilize the bijection between such trees and the full triangulations~\cite{Stanley_2015}.

\renewcommand{\thesubfigure}{\alph{subfigure}}
\newcommand{\rect}[4]{\draw (#1,#2) rectangle (#3,#4);}
\afterpage{
\begin{figure}[t]
  \centering

  \begin{subfigure}[b]{0.19\linewidth}
    \centering
    \begin{tikzpicture}[scale=1.2]
      \foreach \i in {0,1,...,7} {
        \coordinate (V\i) at ({360/8 * \i + 22.5}:1);
      }
      \draw[thick]
        (V0) -- (V1) -- (V2) -- (V3) -- (V4) -- (V5) -- (V6) -- (V7) -- cycle;
      \draw (V0) -- (V2);
      \draw (V7) -- (V2);
      \draw (V7) -- (V3);
      \draw (V6) -- (V3);
      \draw (V6) -- (V4);
      \foreach \i in {0,1,...,7} {
        \fill (V\i) circle (1pt);
      }
    \end{tikzpicture}
    \caption{}
    \label{fig:octagon-full}
  \end{subfigure}
  \hfill
  \begin{subfigure}[b]{0.19\linewidth}
    \centering
    \begin{tikzpicture}[scale=1.2]
      \foreach \i in {0,1,...,7} {
        \coordinate (V\i) at ({360/8 * \i + 22.5}:1);
      }
      \draw[thick]
        (V0) -- (V1) -- (V2) -- (V3) -- (V4) -- (V5) -- (V6) -- (V7) -- cycle;
      \draw (V0) -- (V2);
      \draw (V7) -- (V3);
      \draw (V6) -- (V3);
      \draw (V6) -- (V4);
      \foreach \i in {0,1,...,7} {
        \fill (V\i) circle (1pt);
      }
    \end{tikzpicture}
    \caption{}
    \label{fig:octagon-partial}
  \end{subfigure}
  \hfill
  \begin{subfigure}[b]{0.19\linewidth}
    \centering
    \begin{tikzpicture}[scale=1.2]
      \foreach \i in {0,1,...,7} {
        \coordinate (V\i) at ({360/8 * \i + 22.5}:1);
      }
      \draw[thick]
        (V0) -- (V1) -- (V2) -- (V3) -- (V4) -- (V5) -- (V6) -- (V7) -- cycle;
      \draw (V0) -- (V2);
      \draw (V7) -- (V2);
      \draw (V2) -- (V5);
      \foreach \i in {0,1,...,7} {
        \fill (V\i) circle (1pt);
      }
    \end{tikzpicture}
    \caption{}
    \label{fig:octagon-partialb}
  \end{subfigure}
  \hfill
  \begin{subfigure}[b]{0.19\linewidth}
    \centering
    \begin{tikzpicture}[scale=0.14]
      \rect{0}{8}{2}{16}
      \rect{0}{0}{2}{8}
      \rect{2}{8}{4}{16}
      \rect{2}{0}{4}{8}
      \rect{4}{12}{8}{16}
      \rect{4}{8}{8}{12}
      \rect{4}{4}{8}{8}
      \rect{4}{0}{8}{4}
      \rect{8}{14}{16}{16}
      \rect{8}{12}{16}{14}
      \rect{8}{8}{12}{12}
      \rect{12}{8}{16}{12}
      \rect{8}{6}{16}{8}
      \rect{8}{0}{12}{4}
      \rect{12}{0}{16}{4}
      \rect{8}{4}{16}{6}
    \end{tikzpicture}
    \caption{}
    \label{fig:dyadic}
  \end{subfigure}
  \hfill
  \begin{subfigure}[b]{0.19\linewidth}
    \centering
    \begin{tikzpicture}[scale=0.46]
      \draw (0,0) -- (0,1);  \draw (0,0) -- (1,0);  \draw (0,0) -- (1,1);
      \draw (0,1) -- (0,2);  \draw (0,1) -- (1,1);  \draw (0,1) -- (1,2);
      \draw (0,2) -- (0,3);  \draw (0,2) -- (1,2);  \draw (0,2) -- (1,3);
      \draw (0,2) -- (2,3);
      \draw (0,3) -- (0,4);  \draw (0,3) -- (1,3);  \draw (0,3) -- (1,4);
      \draw (0,4) -- (0,5);  \draw (0,4) -- (1,4);  \draw (0,4) -- (1,5);
      \draw (0,5) -- (1,5);
      \draw (1,0) -- (1,1);  \draw (1,0) -- (2,0);  \draw (1,0) -- (2,1);
      \draw (1,0) -- (2,2);
      \draw (1,1) -- (1,2);  \draw (1,1) -- (2,2);
      \draw (1,2) -- (2,2);  \draw (1,2) -- (2,3);
      \draw (1,3) -- (1,4);  \draw (1,3) -- (2,3);  \draw (1,3) -- (2,4);
      \draw (1,4) -- (1,5);  \draw (1,4) -- (2,4);
      \draw (1,5) -- (2,4);  \draw (1,5) -- (2,5);
      \draw (2,0) -- (2,1);  \draw (2,0) -- (3,0);  \draw (2,0) -- (3,1);
      \draw (2,1) -- (2,2);  \draw (2,1) -- (3,1);  \draw (2,1) -- (3,2);
      \draw (2,1) -- (3,3);
      \draw (2,2) -- (2,3);  \draw (2,2) -- (3,3);
      \draw (2,3) -- (2,4);  \draw (2,3) -- (3,3);  \draw (2,3) -- (3,4);
      \draw (2,3) -- (3,5);
      \draw (2,4) -- (2,5);  \draw (2,4) -- (3,5);
      \draw (2,5) -- (3,5);
      \draw (3,0) -- (3,1);  \draw (3,0) -- (4,0);  \draw (3,0) -- (4,1);
      \draw (3,1) -- (3,2);  \draw (3,1) -- (4,1);
      \draw (3,2) -- (3,3);  \draw (3,2) -- (4,1);  \draw (3,2) -- (4,2);
      \draw (3,2) -- (4,3);
      \draw (3,3) -- (3,4);  \draw (3,3) -- (4,3);  \draw (3,3) -- (4,4);
      \draw (3,4) -- (3,5);  \draw (3,4) -- (4,4);
      \draw (3,5) -- (4,4);  \draw (3,5) -- (4,5);
      \draw (4,0) -- (4,1);  \draw (4,1) -- (4,2);  \draw (4,2) -- (4,3);
      \draw (4,3) -- (4,4);  \draw (4,4) -- (4,5);
      \foreach \x in {0,1,2,3,4} {
        \foreach \y in {0,1,2,3,4,5} {
          \fill (\x,\y) circle (1.4pt);
        }
      }
    \end{tikzpicture}
    \caption{}
    \label{fig:lattice}
  \end{subfigure}

  \caption{Examples of geometric partitions.
    (\subref{fig:octagon-full}) A full triangulation of a regular octagon.
    (\subref{fig:octagon-partial})-(\subref{fig:octagon-partialb}) Two partial triangulations of the same octagon.
    (\subref{fig:dyadic}) A dyadic tiling of the unit square of size $n = 16$.
    (\subref{fig:lattice}) A lattice triangulation of a $4 \times 5$ rectangle.}
  \label{fig:subdivisions}
\end{figure}}

Markov chains on the space of full triangulations have also been studied extensively, and their convergence has proved to be quite difficult to analyze.
The natural diagonal-flip Markov chain, for instance, was conjectured to have
relaxation time $\Theta(n^{3/2})$~\cite{aldous1994triangulating}. Classical results established an $O(n^5 \log n)$ mixing time upper bound~\cite{mcshine1997mixing}
and an $\Omega(n^{3/2})$ lower bound~\cite{molloy1999mixing}.
The upper bound was recently improved to $O(n^{3}\log^3 n)$ in~\cite{eppstein2023improved}
and to $O(n^2 \mathrm{polylog}(n))$ in~\cite{alev2026faster}.
Despite these advances and substantial effort, the conjecture from~\cite{aldous1994triangulating} remains unresolved.

Partial triangulations have been studied in combinatorics since the 1850s and in computational geometry since the 1980s; see, e.g.,~\cite{cayley1890partitions,kirkman1857k,lee1989associahedron,stanley1996polygon,przytycki1998polygon,de2010triangulations}. To the best of our knowledge, however, the weighted sampling problem we consider here has not been previously studied. Our interest in this model is motivated in part by its natural connection to the full triangulations since  
 as $\lambda\to\infty$, $\pi_\nl$ converges to the uniform distribution over full triangulations.

The weighted sampling problem for partial triangulations raises a different type of algorithmic challenge. Namely, partial triangulations can be very sparse, so one would hope for an exact sampling algorithm
with running time that depends on the expected number of diagonals rather than on the polygon size $n$. As we shall see, this expectation is $O(n\sqrt{\lambda})$ and thus can be much smaller than $n$ when $\lambda$ is small. Direct algorithmic approaches, based either on explicitly computing $Z_\nl$ or on generating a full triangulation first, take time at least linear in $n$, independent of the number of diagonals in the resulting partial triangulation. Moreover, Markov chains on the space of partial triangulations with stationary distribution $\pi_\nl$ would produce only approximate (instead of exact) samples and, as the full triangulation case suggests, 
would likely mix in time polynomial in the typical number of diagonals, not to mention that 
establishing such a bound would likely inherit most of the same analytical difficulties encountered in the study of the diagonal-flip chain for full triangulations.

Our main result is a simple, nearly-optimal exact sampling algorithm for $\pi_\nl$ that works for all $\lambda > 0$.

\begin{theorem}
\label{thm:intro}
There is a randomized algorithm that, 
for all $\lambda > 0$ and $n \ge 2$, 
outputs a partial triangulation distributed according to $\pi_\nl$ with expected running time $O\big((\min\{n,n\sqrt{\lambda}\}+1)\log n\big)$.
\end{theorem}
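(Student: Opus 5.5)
Our plan is to reduce sampling from $\pi_\nl$ to a branching (Galton--Watson--type) process and then show that the running time is controlled by the number of diagonals produced. Partial triangulations of an $(n+2)$-gon are in bijection with plane trees: root the polygon at a fixed base edge. The face containing the base edge is a polygon with some $j\ge 3$ sides. Each of its $j-1$ non-base sides is either an edge of the original polygon or a diagonal that cuts off a smaller polygon, and inside that smaller polygon the diagonal plays the role of the base edge. We use this to write the generating function. Let $A(x)$ count, with weight $x^{\#\text{non-base sides}}$, the polygons hanging below a side, where an original edge counts as $x$ and a diagonal subpolygon contributes $\lambda F(x)$, with $F$ the generating function of partial triangulations of the full polygon. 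The recursion is then
\[
F=\sum_{j\ge 2}(x+\lambda F)^j=\frac{(x+\lambda F)^2}{1-x-\lambda F},
\]
a Schröder-type (little Schröder when $\lambda=1$) algebraic equation. We then sample via Boltzmann/multiplicative-recursion ideas conditioned on size: in a critical Boltzmann sampler the tree is a Galton--Watson tree. Its offspring distribution has each internal node with $j\ge2$ children, each child independently a leaf (an original edge) or an internal node (a diagonal), with weights tuned so that the process is critical. Conditioned on having exactly $n+1$ leaves, the tree gives exactly $\pi_\nl$.

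The key step is to avoid paying $n$ per sample. Rather than generating leaves individually, we generate runs of leaves in aggregate. A node with $j$ children of which $m$ are internal is specified by $m$ and by the lengths of the leaf gaps between internal children. These lengths are geometric-type variables, so each can be sampled in $O(1)$ expected time, or $O(\log n)$ with exact arithmetic. The cost of generating a tree is therefore proportional to its number of internal nodes, i.e.\ $|\sigma|+1$. To condition on exactly $n+1$ leaves efficiently, we instead use a direct top-down recursive sampler, the standard "recursive method" for combinatorial classes with a decomposition. Given the current polygon of size $m$, sample the number $k$ of diagonals incident to the root face and their positions. This amounts to sampling a composition of $m+1$ into parts where each part is either $1$ (an edge) or a subpolygon size $s$ weighted by $\lambda Z_{s,\lambda}$. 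We reduce this to the Lagrange-inversion/cycle-lemma structure: the size-$n$ tree is encoded by a Łukasiewicz-type walk, and by the cycle lemma a uniformly random rotation of an i.i.d.\ step sequence conditioned on its total is exact. The i.i.d.\ steps are aggregated as above, with steps sampled at the level of internal nodes only, and conditioning on the total is achieved by rejection with acceptance probability $\Theta(1/\sqrt{\text{\#internal nodes}})$, handled by a local limit theorem.

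For the running-time bound, we compute $\E_{\pi_\nl}|\sigma|$ from the explicit formula. Using $Z_\nl=\sum_k \lambda^{k-1}|\Omega_{n,k}|$ with Kirkman's formula $|\Omega_{n,k}|=\frac{1}{k}\binom{n-1}{k-1}\binom{n+k}{k-1}$, the terms are maximized near $k^2\approx\lambda n^2$. This gives concentration of $|\sigma|$ at $\Theta(\min\{n,n\sqrt\lambda\})$ and $\E|\sigma|=O(\min\{n,n\sqrt\lambda\}+1)$. The extra $\log n$ pays for sampling binomial/geometric variables with parameters depending on $n$ and for maintaining the data structure (a balanced tree of gap positions) used to output the diagonals.

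The main obstacle is making the conditioning on exactly $n+2$ sides exact without a $\sqrt n$ or $n$ overhead from rejection. Our first attempt is to sample $k=|\sigma|+1$ directly from its exact marginal $\propto\lambda^{k-1}|\Omega_{n,k}|$. We would do this by rejection from a discretized Gaussian or geometric envelope around the mode, using the ratio $\frac{|\Omega_{n,k+1}|}{|\Omega_{n,k}|}=\frac{(n-k)(n+k+1)}{(k+1)(k+2)}$ to evaluate probabilities in $O(1)$ amortized time relative to the mode. Given $k$, the configuration is uniform on $\Omega_{n,k}$. We sample it via Kirkman's bijection or the cycle lemma for plane trees with $k$ internal nodes and $n+1$ leaves. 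This reduces to choosing a uniform $(k-1)$-subset and a uniform multiset-type composition, doable in $O(k\log n)$ time, followed by a uniform cyclic rotation. Verifying that this cycle-lemma encoding matches $\binom{n-1}{k-1}\binom{n+k}{k-1}/k$ exactly is the delicate part.
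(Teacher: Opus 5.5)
Your last paragraph is the viable plan, and it has the same two-stage architecture as the paper. First draw $k=|\sigma|+1$ from its marginal $\propto\lambda^{k-1}|\Omega_{n,k}|$, then draw a uniform element of $\Omega_{n,k}$, for a total cost of $O(\E[k]\log n)$. Drop the earlier routes rather than combining them with this one. Rejecting Boltzmann/Galton--Watson samples until there are exactly $n+1$ leaves costs, by your own estimate, $\Theta(\sqrt{k})$ trials of cost $\Theta(k)$ each. The recursive method needs $Z_{s,\lambda}$ for all $s<n$, which is already $\Omega(n)$ work.

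Your second stage is a genuinely different and valid alternative to the paper's, which uses the string bijection, R\'emy's algorithm, rejection with probability $w(P)/M$, and Floyd's algorithm. The count you flag as delicate does check out. Sequences of length $N=n+k+1$ with $n+1$ entries equal to $-1$ and $k$ entries $c_v-1\ge 1$ summing to $n$ number $\binom{n+k+1}{k}\binom{n-1}{k-1}$. Since $\frac{1}{n+k+1}\binom{n+k+1}{k}=\frac{1}{n+1}\binom{n+k}{k}$, dividing by $N$ gives exactly Kirkman's formula. You must take the unique rotation that is a \L{}ukasiewicz word (start right after the first minimum of the partial sums), not a uniform rotation. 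Because that rotation is unique, the $N$ rotations are distinct and the map to trees is $N$-to-one. Hence a uniform $k$-subset of positions plus an independent uniform composition of $n$ into $k$ positive parts yields a uniform tree. Minima occur at ends of runs of $-1$'s, so everything after Floyd's algorithm and sorting takes $O(k)$. This gives an $O(k\log k)$ second stage with no rejection step.

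The genuine gap is the first stage, which is exactly where the paper's main lemma does its work. Let $W_k=\lambda^{k-1}|\Omega_{n,k}|$ with mode $k^*$. ``Rejection from a discretized Gaussian or geometric envelope around the mode'' is not yet an exact sampler. You need an envelope that provably dominates $W_k/W_{k^*}$ on all of $\{1,\dots,n\}$, and whose mass is within a constant factor of $\sum_k W_k/W_{k^*}$; neither is shown.

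The Gaussian option actually fails. Above the mode, $\log(W_k/W_{k^*})$ decays only like $-\Theta(k\log(k/k^*))$, so when $k^*\ll n$ no Gaussian of width $O(\sqrt{k^*})$ dominates it with a bounded constant. A geometric-tailed envelope can be made to work, but only by using log-concavity (the ratio $W_{k+1}/W_k$ is decreasing) both to bound the tails and to lower-bound the central mass. Also, evaluating $W_k/W_{k^*}$ costs $|k-k^*|$ ratio multiplications, not $O(1)$. Moreover, your ratio is wrong: the correct one is $\frac{|\Omega_{n,k+1}|}{|\Omega_{n,k}|}=\frac{(n-k)(n+k+1)}{k(k+1)}$. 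Yours gives $2/3$ instead of $2$ at $n=2$, $k=1$, and every quantity in this stage is computed from it.

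The paper avoids envelopes entirely. Since $W_{k+1}/W_k\le\frac14$ for $k\ge b=\lceil 2s\rceil$ with $s^2=\frac{\lambda}{1+\lambda}n(n+1)$, it takes the truncated sum $Z_\nl^-=\sum_{k\le m}W_k$ with $m=\min\{n,b+\lceil 2\log_4 n\rceil\}$, and sets $Z_\nl^+=Z_\nl^-+\frac43W_{m+1}$. Both are computable in $O(m)=O(\E_{\pi_\nl}[|\sigma|]+\log n)$ time and satisfy $Z_\nl^-\le Z_\nl\le Z_\nl^+\le(1+\frac{1}{3n^2})Z_\nl^-$. An inverse-CDF scan against $Z_\nl^+$, restarted on overflow, is then exact. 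Its expected scan length is $\E_{\pi_\nl}[|\sigma|]+1$, and it pays the $O(n)$ cost of a failed round only with probability at most $\frac{1}{3n^2}$. The same ratio bound is what gives $\E_{\pi_\nl}[|\sigma|]<b=O(\min\{n,n\sqrt\lambda\}+1)$, which you assert without proof; locating the mode does not by itself bound the mean.
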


We prove that the expected number of diagonals under $\pi_{\nl}$ satisfies $\E_{\pi_\nl}[|\sigma|] = \Theta(\min\big\{n,n\sqrt{\lambda}\big\})$ when $\lambda > 1/n^2$, which implies that our result is optimal up to a logarithmic factor (see Lemma~\ref{lemma:expectation:bound} for more precise bounds on $\E_{\pi_\nl}[|\sigma|]$). 
Our algorithm works in two stages. First, it samples the number $X$ of diagonals in a partial triangulation drawn from $\pi_\nl$, and then generates a partial triangulation uniformly at random among all of those with exactly $X$ diagonals.
The algorithm to exactly sample the number of diagonals relies on the fact that we can efficiently find real numbers $Z_\nl^-$ and $Z_\nl^+$ such that $Z_\nl^- \le Z_\nl \le Z_\nl^+$ and 
$$
\frac{Z_\nl^+}{Z_\nl^-} \le 1+\frac{1}{3n^2}.
$$
We use this bound to design a natural Las Vegas procedure to generate a perfect sample from the target distribution over the number of diagonals.
At a high level, this procedure uses the upper and lower bounds on $Z_\nl$ to construct a simpler distribution and then corrects the bias via a rejection step. The bounds on $Z_\nl$ are obtained by noting that the ratio of the total contribution to $Z_\nl$ of configurations with $k+1$ vs. $k$ diagonals is decreasing in $k$ and decays at a geometric rate when $k 
\gtrsim n\sqrt{\lambda}$.

After generating the number of diagonals $X$ from the correct distribution, we identify a bijection between the partial triangulations in $\Omega_n$ and strings over the alphabet ')', '(', and '0' that are balanced in terms of open and close parentheses, start with '(' and have all zeros placed between pairs of open and close parentheses. 
To sample uniformly such a string,
our algorithm first uses R{\'e}my's algorithm~\cite{remy1985procede} for generating a full binary tree with $X+1$ internal vertices uniformly at random and translates the binary tree into a balanced parenthesization using a well-known Catalan bijection.
The resulting parenthesization is uniform over all balanced parenthesizations of that size, but not all of them have the same number of admissible positions where zeros can be inserted. Therefore, to insert the correct bias, we use another rejection step so that each parenthesization is accepted with probability proportional to the number of zero arrangements it permits.  Finally, the algorithm inserts the zeros by sampling a random partition consistent with the required placement constraints; for this we use an algorithm of Floyd for sampling subsets uniformly at random efficiently~\cite{bentley1987programming}.

\section{Sampling Algorithm}
\label{sec:alg}

Recall that $\Omega_{n,k}$ denotes the set of partitions of a convex polygon with exactly $n+2$ sides into $k$ parts. Hence, $\Omega_{n,n}$ is the set of full triangulations of the $(n+2)$-gon. We let $\Omega_n = \bigcup_{k=1}^n \Omega_{n,k}$ be the set of partial triangulations of the $(n+2)$-gon. 
We assume the RAM model of computation 
where arithmetic operations are assumed to take constant time, and we further assume access to a stream of perfectly random real numbers in $[0,1]$. These assumptions fix the computational model for consistency, but our algorithm extends to other models with minor modifications.

\subsection{Sampling the number of diagonals}

Our algorithm works in two stages: first we sample the number of diagonals and then a partial triangulation with exactly that number of diagonals uniformly at random.
We use the following facts in the first stage of the algorithm. 
The first one is a classical formula for the cardinality of $\Omega_{n,k}$,
stated by Kirkman~\cite{kirkman1857k} and first
proved by Cayley~\cite{cayley1890partitions}.
A modern proof of a more general formula is given in~\cite[Corollary~2]{przytycki1998polygon}.

\begin{lemma}\label{lemma:countomega}
For any $1\leq k\leq n$,
\begin{align}
   |\Omega_{n,k}| 
   = \frac{1}{n+1}\binom{n+k}{k}\binom{n-1}{k-1}. 
   \label{eq:countparttriang}
\end{align}
\end{lemma}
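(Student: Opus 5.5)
We count dissections of the $(n+2)$-gon into $k$ parts with $k-1$ diagonals. The plan is a generating-function argument via root decomposition, then Lagrange inversion.

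Label the polygon's vertices $0,1,\dots,n+1$ and distinguish the root side $\{0,n+1\}$. Let $D(x,y)=\sum_{n\ge 0}\sum_{\sigma} x^{n}y^{|\sigma|+1}$, where the sum is over dissections $\sigma$ of the $(n+2)$-gon and $y$ marks the number of parts. The degenerate case $n=0$ (a single edge) contributes $1$. In any nondegenerate dissection, the root side lies in a unique cell, which is a polygon with $j+1\ge 3$ sides. Each of its $j\ge 2$ non-root sides is either a polygon side or a diagonal cutting off an independent dissected sub-polygon, rooted at that side. Counting $x$ by the number of polygon vertices beyond the two root endpoints gives
\[
D = 1 + y\sum_{j\ge 2} x^{j-1} D^{j} = 1 + \frac{yxD^2}{1-xD}.
\]
Here each of the $j$ sides contributes a factor $D$, the root cell contributes $y$, and the $j-1$ interior vertices of the root cell contribute $x^{j-1}$. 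Diagonals are not marked separately: each sub-polygon attached along a diagonal already counts its own parts, and the total number of parts is what we want.

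Next we substitute $D=1+F$ and, more conveniently, $W=xD$. Then $W = x\,\phi(W)$ with
\[
\phi(w)=1+\frac{y\,w}{1-w}\cdot\frac{1}{1}
\]
after simplification. Concretely, $D = 1 + yxD^2/(1-xD)$ becomes $W/x = 1 + yW^2/(x(1-W))$, i.e. $W = x + yW^2/(1-W)$. This form is not directly of Lagrange type in $x$, so we instead set $U=D-1$ and derive $U = y\,x(1+U)^2/(1-x(1+U))$. It is cleanest to extract the coefficient $[x^n y^k]$ directly from the functional equation $D=1+y\sum_{j\ge2}x^{j-1}D^j$ by applying Lagrange inversion with respect to the variable $t=xD$, or equivalently by rewriting the equation as $xD = x\,\Phi(xD)$ with $\Phi(t)=1+y\,t/(1-t)$. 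This rewrite holds because multiplying $D=1+ y\, xD^2/(1-xD)$ by $x$ gives $t = x(1+ y t/(1-t))$, where $t=xD$.

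Lagrange inversion now gives, for $n\ge 1$,
\[
[x^{n+1}]\, t = \frac{1}{n+1}[t^{n}]\Phi(t)^{n+1} = \frac{1}{n+1}[t^n]\sum_{m}\binom{n+1}{m}y^m\frac{t^m}{(1-t)^m}.
\]
Since $[x^{n+1}]t=[x^n]D$, we can read off the count. Taking $[y^k]$ selects $m=k$, and $[t^{n}]\,t^k(1-t)^{-k}=\binom{n-1}{k-1}$. Hence
\[
|\Omega_{n,k}| = \frac{1}{n+1}\binom{n+1}{k}\binom{n-1}{k-1}.
\]
This conflicts with \eqref{eq:countparttriang}, which has $\binom{n+k}{k}$. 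The discrepancy is a bookkeeping error in the decomposition: I marked the parts by $y$ but did not account correctly for which cells are created. The main obstacle is therefore getting the grading right.

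I would recheck the decomposition against small cases. For the square ($n=2$) the correct counts are $k=2$: $2$ and $k=1$: $1$. For the pentagon ($n=3$) they are $k=3$: $5$, $k=2$: $5$, $k=1$: $1$, and the target formula reproduces these, giving $\tfrac14\binom{6}{3}\binom22=5$ and $\tfrac14\binom52\binom21=5$. The repair is to recognise that the sub-polygons hanging off the root cell's sides are attached along diagonals, and that the power of $\Phi$ should carry $t^m$ with $m$ counting diagonal-sides (each a new part) while polygon sides contribute plainly. This gives $\Phi(t)=1/(1-\,\cdot\,)$-type weights, with $\Phi(t)=\sum_{j\ge2}$ counted over all cell sizes rather than $m$ choices among $n+1$ sides. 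Redoing the extraction with $\Phi$ equal to the generating function of cells, with $y$ per cell and one $t$ per vertex, yields a $[t^n](1-t)^{-(n+k)}$-type factor, i.e. $\binom{n+k}{k}$-shaped terms, together with $\binom{n-1}{k-1}$ from distributing vertices among the $k$ cells. The cell sizes must each be at least $3$, which corresponds to compositions of $n-1$ into $k$ positive parts. Once the grading is fixed, the Lagrange step is routine. As a fallback, the formula can be taken from \cite[Corollary~2]{przytycki1998polygon} as cited.
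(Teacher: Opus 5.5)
There is a genuine gap. The computation you carry out yields a false formula, and the step you blame is not the faulty one. Your functional equation $D = 1 + yxD^2/(1-xD)$ is correct: expanding it gives $[x^2]D = 2y^2+y$ and $[x^3]D = 5y^3+5y^2+y$, which match your small cases. So there is no grading problem in how cells are marked. The error is the rewrite $t = x\bigl(1+yt/(1-t)\bigr)$. Multiplying by $x$ turns $yxD^2/(1-xD)$ into $yt^2/(1-t)$, so the true equation is $t = x + yt^2/(1-t)$, which is the equation you yourself wrote for $W$ one line earlier. Your version amounts to $D = 1 + yxD/(1-xD)$, i.e.\ it drops a factor of $D$. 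The Lagrange computation that follows therefore counts the wrong objects. The ``repair'' paragraph only describes the hoped-for shape of the answer without deriving it; for instance, $[t^n](1-t)^{-(n+k)}=\binom{2n+k-1}{n}$, not $\binom{n+k}{k}$. As submitted, the lemma rests only on your fallback citation. That is all the paper does as well: it gives no proof and cites Kirkman, Cayley and Przytycki--Sikora.

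The fix within your framework is short. From $D = 1 + yDt/(1-t)$, solve to get $D = (1-t)/(1-(1+y)t)$. Hence $t = x\Phi(t)$ with $\Phi(t) = 1 + yt/(1-(1+y)t)$, and Lagrange inversion gives
\[
|\Omega_{n,k}| = \frac{1}{n+1}[y^k t^n]\,\Phi(t)^{n+1} = \frac{1}{n+1}\sum_{m=1}^{k}\binom{n+1}{m}\binom{n-1}{m-1}\binom{n-m}{k-m}.
\]
Since $\binom{n-1}{m-1}\binom{n-m}{k-m} = \binom{n-1}{k-1}\binom{k-1}{m-1}$, Vandermonde's identity $\sum_m\binom{n+1}{m}\binom{k-1}{k-m} = \binom{n+k}{k}$ gives exactly the stated formula.

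Alternatively, you can avoid the summation. Pass to the dual plane tree, which has $k$ internal nodes of out-degree at least $2$ and $n+1$ leaves. Let the Lagrange variable $w$ mark all $n+k+1$ nodes. Then $T = w\bigl(1+yT^2/(1-T)\bigr)$ is already in Lagrange form and yields $\frac{1}{n+k+1}\binom{n+k+1}{k}\binom{n-1}{k-1}$ directly, which equals the target.
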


\begin{lemma}
\label{lemma:new:ratio:bound}
For every $n\geq 2$ and $\lambda>0$, there exist real numbers
$Z_\nl^-,Z_\nl^+>0$ such that
$ Z_\nl^- \leq Z_\nl \leq Z_\nl^+$ and
    $$
    \frac{Z_\nl^+}{Z_\nl^-}
    \leq 1+\frac{1}{3n^2}.
    $$
Moreover, $Z_\nl^-$ and $Z_\nl^+$ can be computed in
$O\big(\E_{\pi_{n,\lambda}}[|\sigma|]+\log n
\big)$ time.
\end{lemma}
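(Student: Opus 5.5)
We write $Z_\nl=\sum_{k=1}^{n} a_k$ with $a_k = |\Omega_{n,k}|\lambda^{k-1}$, using Lemma~\ref{lemma:countomega}. The plan has four steps.

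\textbf{Step 1: the term ratio.} From \eqref{eq:countparttriang},
$$
r_k := \frac{a_{k+1}}{a_k} = \lambda\,\frac{(n+k+1)(n-k)}{(k+1)k}.
$$
The numerator factor $(n+k+1)(n-k)$ is nonincreasing in $k$, and the denominator $k(k+1)$ is increasing. So $r_k$ is strictly decreasing, and the sequence $(a_k)$ is unimodal. Moreover, $r_k \le \lambda (2n)^2/k^2$ and $r_k\ge \lambda n(n-k)/k^2$. Hence $r_k\le 1/2$ as soon as $k\ge k^\ast:=\lceil 2\sqrt{2}\,n\sqrt{\lambda}\,\rceil$, and $r_k$ decays at least geometrically beyond that point.

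\textbf{Step 2: the bounds.} Fix a cutoff $K=\min\{n,\ k^\ast + \lceil \log_2(6n^2)\rceil+1\}$. Set $Z_\nl^- = \sum_{k=1}^{K} a_k$. For the tail, $a_{k}\le a_K 2^{-(k-K)}$ for $k> K$, so set $Z_\nl^+ = Z_\nl^- + a_K$. If $K=n$ the tail is empty and the sum is exact.

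To get the ratio bound, I compare $a_K$ with $a_{k^\ast}$: there are at least $\log_2(6n^2)$ halving steps between them, so $a_K\le a_{k^\ast}/(6n^2)\le Z_\nl^-/(3n^2)$. If $k^\ast > n$, then $K=n$ and there is nothing to do.

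\textbf{Step 3: computation time.} Compute $a_1=\frac{1}{n+1}\binom{n+1}{1}\binom{n-1}{0}=1$, and then obtain each subsequent term from the previous one by multiplying by $r_k$, at constant cost per term. The total cost is $O(K)=O(\min\{n,n\sqrt\lambda\}+\log n)$.

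\textbf{Step 4: relating to the expected size.} It remains to show $\min\{n,n\sqrt\lambda\}=O(\E_{\pi_\nl}[|\sigma|]+1)$. This is the step I expect to be the main obstacle.

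The plan is to use unimodality. Let $m$ be the mode, i.e. the largest $k$ with $r_{k-1}\ge1$. From the lower bound $r_k\ge \lambda n(n-k)/k^2$, we get $r_k\ge 1$ whenever $k\le c\min\{n,n\sqrt\lambda\}$ for a small constant $c$ (taking $k\le n/2$). Hence $m\ge c\min\{n, n\sqrt\lambda\}$, up to rounding.

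Next I need mass of order a constant near the mode. Because the ratios change slowly (the log-ratio has derivative $O(1/k)$), the terms $a_k$ for $k\in[m/2,m]$ are within a constant factor of $a_m$. Combined with geometric decay above $k^\ast = O(m)$ (note $k^\ast=\Theta(m)$), this shows that $\pi_\nl(|\sigma|\ge m/2)$ is bounded below by a constant. Therefore $\E|\sigma|\gtrsim m$.

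When $m=O(1)$, the $+\log n$ term absorbs everything. This matches the later Lemma~\ref{lemma:expectation:bound}, which could alternatively be invoked if available.
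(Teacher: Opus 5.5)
Your Steps 1--3 are the paper's proof up to constants. The paper truncates at $m=\min\{n,\,b+\lceil 2\log_4 n\rceil\}$ with $b=\lceil 2\sqrt{\lambda n(n+1)/(1+\lambda)}\rceil$, past which $W_{k+1}/W_k\le\frac14$, and sets $Z^+_{n,\lambda}=Z^-_{n,\lambda}+\frac43W_{m+1}$. Your ratio-$\frac12$ threshold $k^\ast$ with a $\lceil\log_2(6n^2)\rceil+1$ buffer works the same way.

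Two small repairs are needed there:
\begin{itemize}
\item When $K=n$ you must set $Z^+_{n,\lambda}=Z^-_{n,\lambda}$ explicitly. With $Z^+_{n,\lambda}=Z^-_{n,\lambda}+a_n$ the ratio tends to $2$ as $\lambda\to\infty$, because $a_n$ then dominates. The paper avoids this case split by adding the first \emph{excluded} term, with $W_{n+1}=0$.
\item Your bound $r_k\ge\lambda n(n-k)/k^2$ is false whenever $k(k+1)<n$ (e.g.\ $k=1$). Since $n+k+1\ge n$ and $k(k+1)\le 2k^2$, the correct bound has an extra factor $\frac12$, which is harmless later.
\end{itemize}

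The genuine gap is in Step 4. The paper does not argue this step from scratch; it invokes Lemma~\ref{lemma:expectation:bound}, which is proved earlier and independently. So your fallback is exactly the paper's route.

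Your self-contained argument, however, rests on a false claim: the terms $a_k$ with $k\in[m/2,m]$ are \emph{not} within a constant factor of $a_m$. A derivative of order $1/k$ for $\log r_k$ only controls the ratios: it gives $\log r_{m-d}\asymp d/m$ near the mode. Summing, $\log(a_m/a_{m-d})\asymp d^2/m$. So only a window of width $O(\sqrt m)$ around the mode is comparable to $a_m$, and $a_{m/2}/a_m$ is exponentially small in $m$. For instance, with $\lambda=1$ and $n=100$ one has $m=71$ and $a_{36}/a_{71}<e^{-30}$.

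The conclusion that $|\sigma|\ge m/2$ with probability bounded below is still true, but for the opposite reason: geometric \emph{growth} below $m/2$.
\begin{itemize}
\item $r_{m-1}\ge1$ means $n(n+1)/((m-1)m)\ge(1+\lambda)/\lambda$.
\item $k\le(m-1)/2$ implies $k(k+1)\le(m-1)m/2$, so $r_k\ge2+\lambda>2$.
\item With $k_0=\lfloor(m-1)/2\rfloor$ this gives $\sum_{k\le k_0}a_k\le2a_{k_0}<a_{k_0+1}$.
\item Hence $\Pr_{\pi_{n,\lambda}}(|\sigma|\ge k_0)\ge\frac12$ and $\E_{\pi_{n,\lambda}}[|\sigma|]\ge k_0/2$.
\end{itemize}
This is precisely the argument in the proof of Lemma~\ref{lemma:expectation:bound}, with $s/2$ in place of $(m-1)/2$.

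Finally, your remark $k^\ast=\Theta(m)$ fails for $\lambda\gg1$, where $k^\ast\gg n\ge m$. The geometric decay above $k^\ast$ is not needed for this lower bound anyway.
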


We prove Lemma~\ref{lemma:new:ratio:bound} in Section~\ref{subsec:ratio}.
Consider now the distribution $\pi_\nl^\perp$ on $\{0,\dots,n-1\}$ where 
$$\pi_\nl^\perp(k) = \frac{|\Omega_{n,{k+1}}|\lambda^k}{Z_\nl}.$$ 
This distribution is the projection of $\pi_\nl$ to the number of diagonals.
To sample from $\pi_\nl$, we first design an algorithm to sample from $\pi_\nl^\perp$.

Using the upper bound $Z_\nl^+$ on $Z_\nl$ from  Lemma~\ref{lemma:new:ratio:bound}, we define for $k \in \{0,\dots,n-1\}$ the sequence:
$$
p_k = \frac{|\Omega_{n,{k+1}}|\lambda^k}{ Z_\nl^+ \prod_{j=0}^{k-1} (1-p_j)}.
$$
We assume that $\prod_{j=0}^{k-1} (1-p_j) = 1$ when $k=0$ so that
$p_0 = |\Omega_{n,1}|/Z_\nl^+=1/Z_\nl^+$. 

Our algorithm for sampling from $\pi_\nl^\perp$ proceeds by rounds.
In each round,
starting with $i = 0$, the algorithm outputs $i$ with probability $p_i$; otherwise, $i$ is increased by $1$ and the process is repeated. 
If the algorithm fails to output any $i \in \{0,\dots,n-1\}$,
a new round is started.
The steps of the algorithm are given in detail below.

\medskip
\RestyleAlgo{ruled}
\begin{algorithm}[H]
	\caption{Exact sampler for $\pi_\nl^\perp$}
	\label{alg:exact:k:sampler}
	
    $R = 1$\;    
	 \While{true}{
	 	\For{$i = 0, \dots, n-1$}{
	 		Sample $r \in [0,1]$ uniformly at random\; 

            Compute $p_{i}$;
            
	 		\If{$r \le p_i$}{\KwOut{$i$}}
	 	}
         $R = R+1$;
	 }
\end{algorithm}

\bigskip
We justify the correctness and analyze the running time of Algorithm \ref{alg:exact:k:sampler} next.
\begin{lemma}
\label{lemma:alg:exact:k:sampler:correctness}
The output of Algorithm \ref{alg:exact:k:sampler} is distributed according to $\pi_\nl^\perp$.
\end{lemma}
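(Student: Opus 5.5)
We analyze a single round of Algorithm~\ref{alg:exact:k:sampler} and then use the fact that rounds are independent and identically distributed. Write $w_k = |\Omega_{n,k+1}|\lambda^k$, so that $Z_\nl=\sum_{k=0}^{n-1} w_k$ and $\pi_\nl^\perp(k)=w_k/Z_\nl$.

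\textbf{Step 1: the $p_k$ are valid probabilities.} By induction on $k$ we will show
\[
\prod_{j=0}^{k-1}(1-p_j) = 1-\frac{\sum_{j<k} w_j}{Z_\nl^+}.
\]
The case $k=0$ is the convention. For the inductive step, the definition gives $p_k\prod_{j<k}(1-p_j)=w_k/Z_\nl^+$. Multiplying the product by $1-p_k$ therefore subtracts exactly $w_k/Z_\nl^+$. Since $\sum_{j\le k} w_j\le Z_\nl\le Z_\nl^+$ by Lemma~\ref{lemma:new:ratio:bound}, each product is at least $1-Z_\nl/Z_\nl^+\ge 0$. It follows that $p_k = w_k/(Z_\nl^+-\sum_{j<k}w_j)\in[0,1]$.

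One subtlety: if $Z_\nl^+=Z_\nl$, the denominator could vanish when $k=n-1$. It cannot, because it equals $Z_\nl^+-\sum_{j<n-1}w_j\ge w_{n-1}>0$. So every $p_k$ is well defined and lies in $[0,1]$.

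\textbf{Step 2: the output law of a single round.} Within a round, the algorithm outputs $k$ exactly when it rejects at $i=0,\dots,k-1$ and accepts at $i=k$. The uniform draws are independent, so this happens with probability
\[
p_k\prod_{j<k}(1-p_j)=\frac{w_k}{Z_\nl^+}.
\]
Summing over $k$, the round produces an output with probability $q:=Z_\nl/Z_\nl^+>0$. Otherwise it fails with probability $1-q$, and a fresh round begins.

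\textbf{Step 3: conclude over rounds.} Rounds use fresh randomness and are i.i.d. The number of rounds is therefore geometric with success probability $q>0$, so the algorithm terminates almost surely. Let $R$ be the round in which the algorithm first produces an output. Then
\[
\Pr[\text{output}=k]=\sum_{r\ge1}(1-q)^{r-1}\frac{w_k}{Z_\nl^+}=\frac{w_k}{qZ_\nl^+}=\frac{w_k}{Z_\nl}=\pi_\nl^\perp(k).
\]
Equivalently, conditioned on a round succeeding, its output has law proportional to $w_k$.

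\textbf{Main obstacle.} The only delicate point is Step 1. We must check that the adaptive acceptance probabilities never exceed $1$, which is exactly where the inequality $Z_\nl^+\ge Z_\nl$ from Lemma~\ref{lemma:new:ratio:bound} is used. The ratio bound $1+\frac{1}{3n^2}$ plays no role in correctness; it matters only for the running time.
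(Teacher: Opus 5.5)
Your proof is correct and follows essentially the same route as the paper. You use the same inductive identity $\prod_{j\le k}(1-p_j)=1-\frac{1}{Z_{n,\lambda}^+}\sum_{j\le k}|\Omega_{n,j+1}|\lambda^j$, relying only on $Z_{n,\lambda}\le Z_{n,\lambda}^+$, to show the $p_k$ are valid probabilities, and then the same per-round output law $w_k/Z_{n,\lambda}^+$. The only difference is that you sum the geometric series over rounds where the paper conditions on the successful round, which is equivalent.
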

\begin{proof}
We first check that $p_k \in (0,1)$ for every $k \in \{0,\dots,n-2\}$ and $p_{n-1} \in (0,1]$. For this, we can show inductively
that for every $k \in \{0,\dots,n-1\}$
\begin{equation}
\label{eq:pk:product}
    \prod_{j=0}^{k}(1-p_j)
    =
    1-
    \frac{1}{Z_\nl^+}
    \sum_{j=0}^{k}|\Omega_{n,j+1}|\lambda^j.
\end{equation}
Indeed, for $k=0$, 
$1-p_0
    =
    1-{1}/{Z_\nl^+}
    =
    1-{|\Omega_{n,1}|}/{Z_\nl^+}
$ since $|\Omega_{n,1}|=1$. Moreover,
\begin{align*}
    \prod_{j=0}^{k}(1-p_j)
    &=
    \prod_{j=0}^{k-1}(1-p_j)(1-p_k)
    =
    \frac{
        Z_\nl^+
        -
        \sum_{j=0}^{k-1}|\Omega_{n,j+1}|\lambda^j
    }{Z_\nl^+}
    \left(
        1-
        \frac{|\Omega_{n,k+1}|\lambda^k}{
            Z_\nl^+
            -
            \sum_{j=0}^{k-1}|\Omega_{n,j+1}|\lambda^j
        }
    \right)\\
    &=
    1-
    \frac{1}{Z_\nl^+}
    \sum_{j=0}^{k}|\Omega_{n,j+1}|\lambda^j.
\end{align*}
Hence, we get
$
p_k =  \frac{|\Omega_{n,{k+1}}|\lambda^k}{ Z_\nl^+ -
    \sum_{j=0}^{k-1}|\Omega_{n,j+1}|\lambda^j}.
$
For $k \in \{0,\dots,n-1\}$, we observe that
\begin{align*}
Z_\nl^+
-\sum_{j=0}^{k-1}|\Omega_{n,j+1}|\lambda^j
-|\Omega_{n,k+1}|\lambda^k
=
Z_\nl^+
-\sum_{j=0}^{k}|\Omega_{n,j+1}|\lambda^j
\geq
Z_\nl
-\sum_{j=0}^{k}|\Omega_{n,j+1}|\lambda^j.
\end{align*}
This difference is strictly positive for $k \in \{0,\dots,n-2\}$ and nonnegative when $k=n-1$; this implies that $p_k \in (0,1)$ for $k \in \{0,\dots,n-2\}$ and $p_{n-1} \in (0,1]$.

For the output distribution of the algorithm, note that for any $\ell\ge 1$, conditional on the $\ell$-th round being reached, the probability that the algorithm outputs in that round is
$$
\sum_{k=0}^{n-1} p_k \prod_{j=0}^{k-1} (1-p_j) = \sum_{k=0}^{n-1} \frac{|\Omega_{n,k+1}|\lambda^k}{ Z_\nl^+} = \frac{ Z_\nl}{Z_\nl^+}.
$$
Hence, conditioned on the algorithm producing an output on the $\ell$-th round, the probability that the output is $k \in \{0,\dots,n-1\}$ is
\begin{equation}\label{eq:rej}
p_k \Big(\prod_{j=0}^{k-1} (1-p_j)\Big) \frac{ Z_\nl^+ }{ Z_\nl } = \pi_\nl^\perp(k),
\end{equation}
as claimed.
\end{proof}

\begin{lemma}
\label{lemma:alg1:rt}
The expected running time of Algorithm~\ref{alg:exact:k:sampler} is $O(\E_{\pi_\nl}[|\sigma|] +\log n)$.
\end{lemma}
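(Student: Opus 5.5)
We need to bound the expected running time of Algorithm~\ref{alg:exact:k:sampler}. The plan is to split it into three parts: the preprocessing cost of computing $Z_\nl^+$ (and $Z_\nl^-$), the cost per step of computing $p_i$, and the expected number of loop iterations.

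\textbf{Preprocessing.} By Lemma~\ref{lemma:new:ratio:bound}, computing $Z_\nl^\pm$ costs $O(\E_{\pi_\nl}[|\sigma|]+\log n)$.

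\textbf{Cost per step.} Each $p_i$ should be computable in $O(1)$ amortized time from $p_{i-1}$. Lemma~\ref{lemma:countomega} gives
$$\frac{|\Omega_{n,k+2}|}{|\Omega_{n,k+1}|}=\frac{(n+k+1)(n-k-1)}{(k+1)(k+2)}.$$
So we maintain $a_k=|\Omega_{n,k+1}|\lambda^k$ by multiplying by this ratio times $\lambda$. We also maintain the running partial sum $S_k=\sum_{j<k} a_j$, and use the formula $p_k=a_k/(Z_\nl^+-S_k)$ from the proof of Lemma~\ref{lemma:alg:exact:k:sampler:correctness}. Each step is then a constant number of arithmetic operations in the RAM model. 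Since $a_0=1$, no initialization beyond $O(1)$ is needed.

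\textbf{Iteration count.} Within a single round, the index reached is at most the output value $k$ (plus one), or $n$ if the round fails. By the computation in the proof of Lemma~\ref{lemma:alg:exact:k:sampler:correctness}, a round succeeds with probability $Z_\nl/Z_\nl^+\ge Z_\nl^-/Z_\nl^+\ge (1+1/(3n^2))^{-1}$. Hence a round fails with probability at most $1/(3n^2+1)\le 1/(3n^2)$.

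The number of rounds is geometric, so the expected number of failed rounds is at most $\frac{1/(3n^2)}{1-1/(3n^2)}=O(1/n^2)$. Each failed round costs $O(n)$, so failed rounds contribute $O(1/n)$ in total.

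In the successful round, the output has law $\pi_\nl^\perp$ by~\eqref{eq:rej}. Conditioning on success in a given round does not bias the output, so the cost of that round is $O(1+\E_{\pi^\perp_\nl}[k])=O(1+\E_{\pi_\nl}[|\sigma|])$.

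Summing the three parts gives $O(\E_{\pi_\nl}[|\sigma|]+\log n)$.

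\textbf{Main obstacle.} The delicate point is making sure failed rounds are charged at most $O(n)$ each and that $n\cdot O(1/n^2)$ is negligible; this is exactly why the ratio bound $1+1/(3n^2)$ was chosen. A secondary check is that, in the successful round, the cost is linear in the output value itself rather than in $n$. That holds because we stop as soon as $r\le p_i$.
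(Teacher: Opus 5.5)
Your proof is correct and follows essentially the same route as the paper: $O(\E_{\pi_\nl}[|\sigma|]+\log n)$ to compute $Z_\nl^+$, $O(1)$ work per step via the ratio from Lemma~\ref{lemma:countomega}, and $O(n)$ per failed round times $E[R_{\mathrm{out}}]-1 \le 1/(3n^2)$ expected failed rounds. The successful round then costs $O(1+\E_{\pi_\nl}[|\sigma|])$ by~\eqref{eq:rej}; the only difference is that you track $a_k$ and partial sums (the cumulative-sum variant in the paper's remark) instead of the paper's direct recurrence for $p_{k+1}$ in terms of $p_k$. One small slip: the ratio is $|\Omega_{n,k+2}|/|\Omega_{n,k+1}| = \frac{(n+k+2)(n-k-1)}{(k+1)(k+2)}$, with $(n+k+2)$ rather than $(n+k+1)$, although this does not affect the running-time argument.
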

\begin{proof}
     We use the upper bound $Z_\nl^+$ on $Z_\nl$ from Lemma~\ref{lemma:new:ratio:bound}; note that
    $Z_\nl^+$ can be computed in $O(\E_{\pi_\nl}[|\sigma|] +\log n)$ time, and thus so can
    $p_0 = 1/Z_\nl^+$. In addition, from Lemma~\ref{lemma:countomega} we obtain for $k \in \{0,\dots,n-2\}$ that    
    $$
    p_{k+1} = \frac{|\Omega_{n,k+2}| \lambda p_k}{|\Omega_{n,k+1}|(1-p_k)} = \frac{(n+k+2)(n-k-1)}{(k+1)(k+2)} \cdot \frac{\lambda p_k}{(1-p_k)};
    $$
    hence, each $p_k$ can be computed in $O(1)$ time from $p_{k-1}$, and thus the cost of computing $p_i$ in each round is $O(1)$.
    
    Let $T$ denote the random variable corresponding to the total running time of the ``while'' loop of Algorithm~\ref{alg:exact:k:sampler}, $I_\ell$  the number of iterations of the ``for'' loop when $R = \ell$, and $R_{\textsc{out}}$ the value of $R$ when the algorithm outputs.
    Then, 
 \begin{align*}
         \E[T] &= \sum_{\ell \ge 1} \E[T \mid R_{\textsc{out}} = \ell] \Pr[R_{\textsc{out}} 
         = \ell] \\&= O(1)\sum_{\ell \ge 1} (n(\ell-1) + \E[I_\ell\mid \text{round $\ell$ outputs}])\Pr[R_{\textsc{out}} = \ell],
    \end{align*}   
    and using~\eqref{eq:rej}, we obtain
    \begin{align*}
    \E[I_\ell \mid \text{round $\ell$ outputs}] &= \sum_{k=0}^{n-1} (k+1) \Pr[I_\ell = k+1\mid\text{round $\ell$ outputs}]
    \\&= 
    \sum_{k=0}^{n-1} (k+1) \frac{p_k \prod_{j=0}^{k-1} (1-p_j)}{Z_\nl/Z_\nl^+} = \sum_{k=0}^{n-1} \frac{(k+1) |\Omega_{n,k+1}|\lambda^k}{Z_\nl}  
    \\
    &= \E_{\pi_\nl}[|\sigma|]+1.    
    \end{align*}
    Therefore,
    \begin{align*}
    \E[T] &= O(\E_{\pi_\nl}[|\sigma|]+1) + O(n) \sum_{\ell \ge 1} (\ell-1)\Pr[R_{\textsc{out}}=\ell] \\
    &= O(\E_{\pi_\nl}[|\sigma|]+1) + O(n)(\E[R_{\textsc{out}}]-1).
    \end{align*}
    Finally, note that $R_{\textsc{out}}$ is a geometric random variable with support $\{1,2,\dots\}$. 
    Therefore,
    by Lemma~\ref{lemma:new:ratio:bound} 
    $$\E[R_{\textsc{out}}] = \frac{Z_\nl^+}{Z_\nl} \le \frac{Z_\nl^+}{Z_\nl^-} \le  1 + \frac{1}{3n^2},$$
    and we can conclude that $\E[T] = O(\E_{\pi_\nl}[|\sigma|] +1)$. So, the overall running time is dominated by the computation of $p_0$.
\end{proof}

\begin{remark}
We remark that there is a somewhat equivalent cumulative sum implementation of each round of Algorithm~\ref{alg:exact:k:sampler}.
In particular, one can sample $r$ uniformly from $[0,Z_\nl^+]$, compute the
terms $|\Omega_{n,k+1}|\lambda^k$ sequentially, and output the first $k$ such that $r \leq
    \sum_{j=0}^{k}|\Omega_{n,j+1}|\lambda^j$.
If no such $k$ exists, the round is rejected.
Under our computational model,
the two implementations are equivalent and have the same expected
running time up to constant factors.
\end{remark}

\subsection{Bounding the expected number of diagonals}

We provide next tight bounds on the expected number of diagonals under $\pi_{\nl}$.
This will be used in the proof of Lemma~\ref{lemma:new:ratio:bound} and 
to connect the $O(\E_{\pi_\nl}[|\sigma|] +\log n)$
running time of Algorithm~\ref{alg:exact:k:sampler} from Lemma~\ref{lemma:alg1:rt} to the claimed running time in Theorem~\ref{thm:intro} from the introduction.

\begin{lemma}
\label{lemma:expectation:bound}
For every $n\geq2$ and $\lambda>0$, the following bounds hold:
\begin{enumerate}
    \item If $\lambda>n^{-2}$, then
    \[
        \frac{1}{16}\min\big\{n,n\sqrt{\lambda}\big\}
        \le
        \E_{\pi_{n,\lambda}}[|\sigma|]
        \le
        4\min\big\{n,n\sqrt{\lambda}\big\}.
    \]
     \item If $\lambda \le n^{-2}$, then
    \[
        \frac{\lambda n^2}{4}
        \le
        \E_{\pi_{n,\lambda}}[|\sigma|]
        \le
        16\lambda n^2.
    \]
\end{enumerate}
\end{lemma}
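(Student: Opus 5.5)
We work with the weights $a_k=|\Omega_{n,k+1}|\lambda^k$, $k\in\{0,\dots,n-1\}$, so that $\pi_\nl^\perp(k)=a_k/Z_\nl$ and $\E_{\pi_\nl}[|\sigma|]=\sum_k k a_k/\sum_k a_k$. By Lemma~\ref{lemma:countomega},
\[
r_k:=\frac{a_{k+1}}{a_k}=\frac{(n+k+2)(n-k-1)}{(k+1)(k+2)}\,\lambda ,
\]
which is strictly decreasing in $k$. Hence $(a_k)$ is log-concave and unimodal, and everything reduces to locating where $r_k$ crosses $1$ and controlling the geometric decay on either side. For $k\le n/2$ we have $r_k\asymp \lambda n^2/k^2$, with constants between $1$ and about $4$; near $k=n-1$ the factor $(n-k-1)$ takes over.

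\emph{Small $\lambda$ (part 2).} If $\lambda\le n^{-2}$, then $r_0=(n+2)(n-1)\lambda/2\le 1/2+O(1/n)$, and $r_k\le r_0/\binom{k+2}{2}$ up to constants. So $a_k\le a_0\prod_{j<k}r_j$ decays superexponentially, giving $Z_\nl=a_0(1+O(\lambda n^2))$. Then $\E[|\sigma|]\ge a_1/Z_\nl=r_0/(1+O(\lambda n^2))\ge \lambda n^2/4$. For the upper bound, $\E[|\sigma|]\le \sum_{k\ge1}k a_k/a_0\le r_0\sum_k k\prod_{1\le j<k}r_j\le 16\lambda n^2$, since $r_j\le 1/3$ for $j\ge1$. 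The constants need a short check when $n$ is small (e.g.\ $n=2$).

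\emph{Large $\lambda$ (part 1).} The upper bound $\E\le n\le 4n$ is trivial, so when $\lambda\ge1$ only the lower bound $n/16$ matters. For $\lambda\le1$ we must show $\E\asymp n\sqrt\lambda$. Set $m=\lceil n\sqrt\lambda\rceil\ge1$.

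\emph{Upper bound for $\lambda\le1$.} For $k\ge 2m$ we get $r_k\le (2n)^2\lambda/k^2\cdot$const$\le$ some $c<1$, and in fact $r_k\le 1/2$ once $k\ge 4n\sqrt\lambda$ (using $(n+k+2)(n-k-1)\le 2n^2$ roughly). So $\Pr(|\sigma|\ge 4m+j)\le 2^{-j}$ relative to the mode's mass, which gives $\E\le 4m+O(1)$. We then absorb the additive constant using $m\ge1$. The constant $4$ may need to be adjusted, for instance to $\min\{n,n\sqrt\lambda\}\cdot4$ via $m\le 2n\sqrt\lambda$.

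\emph{Lower bound.} For $k\le m/4$ (assuming $k\le n/2$) we have $r_k\ge n^2\lambda/(k+2)^2\cdot\tfrac12\ge 2$ or so, so mass below $m/8$ is at most a constant fraction: $a_k\le 2^{-(k_0-k)}a_{k_0}$ for $k\le k_0:=\lfloor m/4\rfloor$. This gives $\Pr(|\sigma|<m/8)\le 1/2$, hence $\E\ge m/16$. For $\lambda\ge1$ the same argument with $k_0\approx n/4$ works, since $r_k\ge \lambda (n-k)^2/(k+2)^2\ge 2$ for $k\le n/4$, giving $\E\ge n/16$. When $m/4<1$, i.e.\ $\lambda$ is just above $n^{-2}$, we instead argue directly that $a_1\gtrsim a_0$ and the tail is small, yielding $\E\ge$ const $\ge n\sqrt\lambda/16$.

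\emph{Main obstacle.} The main difficulty is not the structure but making the stated explicit constants ($1/16$, $4$, $1/4$, $16$) hold uniformly, including the boundary regimes $\lambda\approx n^{-2}$ and $\lambda\approx 1$ and small $n$. Our plan there is to use the clean comparisons $\tfrac{n^2}{(k+2)^2}\lambda\le 2r_k$ (valid for $k\le n/2$) and $r_k\le \tfrac{2n^2\lambda}{(k+1)^2}$, and then sum the resulting geometric series carefully, checking the few small-$n$ cases by hand if needed.
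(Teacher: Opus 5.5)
Your plan follows the same route as the paper. It uses the decreasing ratio $r_k$, a cutoff plus geometric tail for the upper bound, and geometric growth below a cutoff forcing $\Pr(|\sigma|\ge k_0)\ge 1/2$ for the lower bound. The near-critical case uses $\E\ge\Pr(|\sigma|\ge1)$, and part~2 uses $r_j\le 1/3$ for $j\ge1$. Part~2 and both lower bounds of part~1 do go through with the stated constants once the small cases are checked:
\begin{itemize}
\item your ``$r_k\ge 2$ or so'' for $k<\lfloor m/4\rfloor$ holds when $n\sqrt\lambda\ge5$;
\item for $4\le m\le 7$ only $r_0\ge\lambda n^2/2>2$ is needed;
\item for $m\le3$ one has $\E\ge r_0/(1+r_0)>1/3>3/16$.
\end{itemize}

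The genuine gap is the upper bound $\E_{\pi_{n,\lambda}}[|\sigma|]\le 4n\sqrt\lambda$ for $n^{-2}<\lambda\le1$. With your cutoff at $4m$, $m=\lceil n\sqrt\lambda\rceil$, the argument gives $\E\le 4m+O(1)$. That is already at least $4n\sqrt\lambda$ before the additive term. Absorbing the $O(1)$ via $m\ge1$ and then using $m\le 2n\sqrt\lambda$ yields a constant of roughly $8$ or more, not $4$. You acknowledge the constant ``may need to be adjusted'', but the lemma asserts $4$, so as written this step does not prove the statement.

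The fix is to put the cutoff where the ratio first drops to $1/4$, and to charge all mass below it at the cutoff value, so the tail costs less than $1$. Since $(n+k+2)(n-k-1)\le n(n+1)$, we have $r_k\le \lambda n(n+1)/((k+1)(k+2))\le 1/4$ for all $k\ge K:=\lceil 2\sqrt{\lambda n(n+1)}\,\rceil-1$. If $K\ge n-1$, the bound $\E\le n-1\le K$ is trivial. Otherwise $a_{K+j}\le 4^{-j}a_K$, and
\[
\E_{\pi_{n,\lambda}}[|\sigma|]\le K+\frac{1}{Z_{n,\lambda}}\sum_{j\ge1}j\,a_{K+j}\le K+\frac{a_K}{Z_{n,\lambda}}\sum_{j\ge1}\frac{j}{4^j}\le K+\frac49<2\sqrt{\lambda n(n+1)}+1<3n\sqrt\lambda+1<4n\sqrt\lambda .
\]
The last two steps use $\sqrt{n(n+1)}<3n/2$ and $n\sqrt\lambda>1$. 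This is exactly the paper's argument. The paper takes $b=\lceil 2s\rceil$ with $s^2=\frac{\lambda}{1+\lambda}n(n+1)$ and shows $\E<b$ for every $\lambda>0$. The factor $\frac{\lambda}{1+\lambda}$ lets it treat $\lambda\le1$ and $\lambda\ge1$ at once, via $b<3n\min\{1,\sqrt\lambda\}+1<4\min\{n,n\sqrt\lambda\}$. Your case split at $\lambda=1$ is a fine alternative.
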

\begin{proof}
For $k \in \{1,\dots,n\}$, let
$$
    W_k
    =
    \lambda^{k-1}|\Omega_{n,k}|
    =
    \frac{\lambda^{k-1}}{n+1}
    \binom{n+k}{k}\binom{n-1}{k-1},
$$
so that $Z_\nl=\sum_{k=1}^n W_k$;
for convenience, we set $W_{n+1} = 0$. Then, 
\begin{align}
    \frac{W_{k+1}}{W_k}
    &=
    \lambda
    \frac{\binom{n+k+1}{k+1}}{\binom{n+k}{k}}
    \frac{\binom{n-1}{k}}{\binom{n-1}{k-1}}
    =
    \lambda
    \frac{(n+k+1)(n-k)}{k(k+1)}
    =
    \lambda\left(
        \frac{n(n+1)}{k(k+1)}-1
    \right).
    \label{eq:weight:ratio}
\end{align}
From this, we see that $\frac{W_{k+1}}{W_k}$ is strictly decreasing in $k$ and that $\frac{W_{k+1}}{W_k} < 1$ when
$k(k+1) > \frac{\lambda}{1+\lambda}n(n+1)$. So, let $s = \sqrt{
            \frac{\lambda}{1+\lambda}n(n+1)
        }$ and let $b
    =
    \lceil
        2s
    \rceil.$
    We show first that $\E_{\pi_{n,\lambda}}[|\sigma|]<b$ for all $\lambda > 0$.
When $b\geq n$, we trivially have
$
    \E_{\pi_{n,\lambda}}[|\sigma|]
    \leq n-1
    <b.
$
Suppose then that $b<n$ and observe that $
    \lambda<\frac{n}{3n+4}<\frac{1}{3}$
and that for $k \ge b$ we have
\[
    k(k+1)\geq k^2 \geq 4
        \frac{\lambda}{1+\lambda}n(n+1).
\]
Combining these facts with~\eqref{eq:weight:ratio} we obtain
\begin{align}
    \frac{W_{k+1}}{W_k}
    &\leq
    \lambda\left(
        \frac{1+\lambda}{4\lambda}-1
    \right)
    =
    \frac{1-3\lambda}{4}
    \leq \frac{1}{4}.
    \label{eq:weight:geometric}
\end{align}
Consequently, for every $j\in\{1,\dots,n-b\}$, we have
\[
    W_{b+j}
    \leq
    \frac{1}{4^j}W_b.
\]
Using this fact:
\begin{align*}
     \E_{\pi_{n,\lambda}}[|\sigma|] = \sum_{k=1}^n(k-1) \frac{W_k}{Z_\nl}
    &=
    \frac{1}{Z_\nl} \sum_{k=1}^b(k-1)W_k
    +
      \frac{1}{Z_\nl} \sum_{j=1}^{n-b}(b+j-1)W_{b+j}\\
    &\leq
      \frac{b-1}{Z_\nl}\sum_{k=1}^bW_k
    +
    \frac{1}{Z_\nl} \sum_{j=1}^{n-b}\bigl(b-1+j\bigr)W_{b+j}\\
    &=
    b-1
    +
     \frac{1}{Z_\nl} \sum_{j=1}^{n-b}jW_{b+j}\\
    &\leq
    b-1+
    \frac{W_b}{Z_\nl}
    \sum_{j=1}^{\infty}\frac{j}{4^j}\\
    &\leq
    b-1+\frac49
    <b.
\end{align*}
We show next that $\E_{\pi_{n,\lambda}}[|\sigma|]
    \ge b/16$ when $\lambda > 1/n^2$.
For this, let us assume first that $s \ge 2$. 
Then, for
$k \le  s/2$ we have
\[
    k(k+1)
    \leq
    \frac{s}{2}\left(\frac{s}{2}+1\right)
    \le
    \frac{s^2}{2},
\]
and it follows from~\eqref{eq:weight:ratio} that
\[
    \frac{W_{k+1}}{W_k}
    \ge
    \lambda\left(
        \frac{2(1+\lambda)}{\lambda}-1
    \right)
    =
    2+\lambda
    >
    2.
\]
Also, $s/2<n$ since $s^2<n(n+1)<4n^2$. Hence
$\lfloor s/2\rfloor+1\le n$, and
\[
    \sum_{k=1}^{\lfloor s/2\rfloor}W_k
    \le W_{\lfloor s/2\rfloor}
    \sum_{j=0}^{\lfloor s/2\rfloor-1}\frac{1}{2^j}
    \le
    2W_{\lfloor s/2\rfloor}
    <
    W_{\lfloor s/2\rfloor+1}
    \leq
    \sum_{k=\lfloor s/2\rfloor+1}^{n}W_k.
\]
Consequently,
$
    \Pr_{\pi_{n,\lambda}}\left(
        |\sigma|\geq\left\lfloor s/2 \right\rfloor
    \right)
    \geq\frac12,$
and therefore when $s \ge 2$
\[
    \E_{\pi_{n,\lambda}}[|\sigma|]
    \ge
    \frac12\left\lfloor\frac s2\right\rfloor
    \ge
    \frac{b}{16}.
\]

By Markov's inequality, using the fact that $x/(1+x)$ is an increasing function, we have
\begin{align}
    \E_{\pi_{n,\lambda}}[|\sigma|]
    &\ge
    \Pr_{\pi_{n,\lambda}}(|\sigma| \geq1)
    =
    \frac{\sum_{k=2}^nW_k}
         {1+\sum_{k=2}^nW_k}
    \geq
    \frac{W_2}{1+W_2}. \label{eq:w2:lower}
\end{align}
When $s<2$, since $\lambda>1/n^{2}$, we also have
\[
    W_2
    =
    \frac{\lambda}{2}(n-1)(n+2)
    \geq
    \frac{\lambda n^2}{2}
    >
    \frac12,
\]
and thus $\E_{\pi_{n,\lambda}}[|\sigma|] > \frac 13 > \frac{b}{16}$.

Now, observe that $n < \sqrt{n(n+1)}<3n/2$, and 
when $\lambda>1/n^{2}$ we have
\[
    \frac{1}{2}\min\{1,\sqrt{\lambda}\}
    <
    \sqrt{\frac{\lambda}{1+\lambda}}
    <
    \min\{1,\sqrt{\lambda}\}.
\]
Consequently,
$
    b
    \geq
    2s
    >
    \min\{n,n\sqrt{\lambda}\},
$
and
\[
    b
    <
    2s+1
    <
    3n\min\{1,\sqrt{\lambda}\}+1
    <
    4\min\{n,n\sqrt{\lambda}\}.
\]
It follows that
\[
    \frac{1}{16}\min\{n,n\sqrt{\lambda}\}
    <
    \E_{\pi_{n,\lambda}}[|\sigma|]
    <
    4\min\{n,n\sqrt{\lambda}\}.
\]

We consider now the $\lambda\leq 1/n^{2}$ case.
Since the ratio $W_{k+1}/W_k$ is
decreasing in $k$, we have that for $n \ge 2$
\begin{align}
\label{eq:w2:upper}
    \frac{W_{k+1}}{W_k}
    \le \frac{W_2}{W_1} =
    \lambda\left(
        \frac{n(n+1)}{2}-1
    \right) \le 
    \frac{n+1}{2n} \le \frac{3}{4},
\end{align}
and then
\begin{equation*}
    W_{j+1}
    \leq
    \left(\frac{3}{4}\right)^{j-1}W_2
\end{equation*}
for every $j\in\{1,\dots,n-1\}$. Since
$Z_{n,\lambda}\geq1$, it follows that
\begin{align*}
   \E_{\pi_{n,\lambda}}[|\sigma|]
    &=
    \frac{1}{Z_{n,\lambda}}
    \sum_{j=1}^{n-1}jW_{j+1}
    \le
    W_2
    \sum_{j=1}^{\infty}
    j\left(\frac{3}{4}\right)^{j-1}
=
    16W_2.
\end{align*}
From~\eqref{eq:w2:upper}, we know that $W_2\leq3/4$, so from~\eqref{eq:w2:lower} we obtain
$
    \E_{\pi_{n,\lambda}}[|\sigma|]
    \ge
    \frac{W_2}{2}.
$
Also, for $n\geq2$,
$
    \frac{\lambda n^2}{2}
    \leq
    W_2
    \leq
    {\lambda n^2}
$ and thus we conclude that $
      \frac{\lambda n^2}{4} \le \E_{\pi_{n,\lambda}}[|\sigma|] \le {16 \lambda n^2},
$ as claimed.
\end{proof}

\subsection{Bounding the partition function ratio: proof of Lemma~\ref{lemma:new:ratio:bound}}
\label{subsec:ratio}

We prove next the bound on the ratio ${Z_\nl^+}/{Z_\nl^-}$
in Lemma~\ref{lemma:new:ratio:bound} which is key in justifying the efficiency of Algorithm~\ref{alg:exact:k:sampler}. 

\begin{proof}[Proof of Lemma~\ref{lemma:new:ratio:bound}]
As in the proof of Lemma~\ref{lemma:expectation:bound}, we let 
$$
    W_k
    =
    \lambda^{k-1}|\Omega_{n,k}|
    =
    \frac{\lambda^{k-1}}{n+1}
    \binom{n+k}{k}\binom{n-1}{k-1}
$$
for $k \in \{1,\dots,n\}$ and set $W_{n+1} = 0$.
Recall from~\eqref{eq:weight:ratio} that the ratio $\frac{W_{k+1}}{W_k}$ is strictly decreasing in $k$ and that $\frac{W_{k+1}}{W_k} < 1$ when
$k(k+1) > \frac{\lambda}{1+\lambda}n(n+1)$. Therefore, 
let $b = \Big\lceil 2\sqrt{
        \frac{\lambda}{1+\lambda}n(n+1)
    }\Big\rceil$ and define the threshold
$$
m = \min\Big\{n,b+\big\lceil2\log_4 n\big\rceil\Big\}.
$$
Set
$Z_\nl^-
    =
    \sum_{k=1}^{m}W_k$ and $
    Z_\nl^+
    =
    Z_\nl^-+\frac{4}{3}W_{m+1}$.
    
If $m=n$, then
$W_{m+1}=W_{n+1}=0,$ and
$
    Z_\nl^-=Z_\nl=Z_\nl^+.$ 
Thus, the required bound on the ratio holds. 
Let us assume then that $m<n$.
In this case,
$
   b < n
$
which implies that 
\begin{equation}
    \label{eq:l:b}
    \lambda<\frac{n}{3n+4}<\frac{1}{3}.
\end{equation}
Moreover, for $k \ge b$ we have
\[
    k(k+1)\geq k^2 \geq 4
        \frac{\lambda}{1+\lambda}n(n+1),
\]
and combining this inequality with~\eqref{eq:weight:ratio} and~\eqref{eq:l:b} we obtain that
\begin{align*}
    \frac{W_{k+1}}{W_k}
    &\leq
    \lambda\left(
        \frac{1+\lambda}{4\lambda}-1
    \right)
    =
    \frac{1-3\lambda}{4}
    \leq \frac{1}{4}.    
\end{align*}
Since $m\geq b$, it follows that
\begin{align*}
    Z_\nl-Z_\nl^-
    &=
    \sum_{k=m+1}^n W_k\leq
    W_{m+1}
    \sum_{j=0}^{\infty}\frac{1}{4^j}
    =
    \frac{4}{3}W_{m+1}
    =
    Z_\nl^+-Z_\nl^-,
\end{align*}
which establishes that $Z_\nl\leq Z_\nl^+$.

Now, to bound the ratio of $Z_\nl^+/Z_\nl^-$ observe that
$$
    W_{m+1}
    \leq
    \frac{1}{4^{\lceil 2\log_4n+1\rceil}}W_{b} \le \frac{W_b}{4n^2}.
$$
Then,
\begin{align*}
    \frac{Z_\nl^+}{Z_\nl^-}
    &=
    1+\frac 43 \cdot \frac{W_{m+1}}{Z_\nl^-}
    \leq
    1+\frac{4}{3} \frac{W_{b}}{4n^2Z_\nl^-}
    \leq
    1+\frac{1}{3n^2},
\end{align*}
where the last inequality follows from the fact that $W_{b}\le Z_\nl^-$ since $b \le m$.

Finally, note that we can compute 
$Z_\nl^-$ and $Z_\nl^+$ in $O(m)$ time 
since $W_{k+1}$ can be computed from $W_k$
using~\eqref{eq:weight:ratio}. Moreover,
\[
    m
    \leq
    b+\left\lceil2\log_4 n\right\rceil
    =
    O\big(\min\{n,n\sqrt{\lambda}\}+\log n\big).
\]
Then, if $\lambda>n^{-2}$, Lemma~\ref{lemma:expectation:bound} implies that
$\min\{n,n\sqrt{\lambda}\}=O(\E_{\pi_\nl}[|\sigma|])$, while if
$\lambda\leq n^{-2}$, then $n\sqrt{\lambda}\leq1$. Therefore,
$
    m=O\big(\E_{\pi_\nl}[|\sigma|]+\log n\big)
$
as required.
\end{proof}

\section{Sampling a partial triangulation with a fixed number of diagonals} 

In the previous section we discussed how to sample the number of diagonals in a partial triangulation from the correct distribution $\pi_\nl^\perp$.
Our next goal is to sample a partial triangulation with a given number of diagonals uniformly at random.

Our algorithm utilizes a bijection between the set of partial triangulations with exactly $k$ parts and strings
 of symbols '(', ')', and '0', where parentheses must 
 be balanced (i.e., respect the open-close structure) and any zero must be inside a pair of open-close parentheses. 
Let $\Upsilon_{m,j}$ be the set of such strings with $m$ pairs of open-close parentheses and $j$ zeros. We 
construct a bijection between the sets $\Omega_{n,k}$ and $\Upsilon_{k,n-k}$.

\begin{figure}[t]
    \centering
    \begin{subfigure}[b]{0.35\linewidth}
   \begin{tikzpicture}[scale=2, line width=0.6pt,baseline=(current bounding box.north)]

  \coordinate (V0) at (240:1);
  \coordinate (V1) at (300:1);
  \coordinate (V2) at (0:1);
  \coordinate (V3) at (60:1);
  \coordinate (V4) at (120:1);
  \coordinate (V5) at (180:1);

  \draw[thick] (V1) -- (V2) -- (V3) -- (V4) -- (V5) -- (V0);
  \draw[thick] (V0) -- (V1);
  \draw (V1) -- (V5);
  \draw (V2) -- (V5);

  \coordinate (F1) at ($(V0)!0.5!(V5)!0.45!(V1)$);
  \coordinate (F2) at ($(V1)!0.5!(V5)!0.45!(V2)$);
  \coordinate (F3) at (barycentric cs:V2=1,V3=1,V4=1,V5=1);

  \coordinate (X0) at ($(V0)!0.5!(V1)+(0,-0.3)$);
  \coordinate (X1) at ($(V5)!0.5!(V0)+(-0.3,-0.15)$);
  \coordinate (X2) at ($(V1)!0.5!(V2)+(0.3,-0.15)$);
  \coordinate (X3) at ($(V2)!0.5!(V3)+(0.3,0.15)$);
  \coordinate (X4) at ($(V3)!0.5!(V4)+(0,0.3)$);
  \coordinate (R)  at ($(V4)!0.5!(V5)+(-0.3,0.15)$);

  \draw[blue, thick] (X0) -- (F1);
  \draw[blue, thick] (X1) -- (F1);
  \draw[blue, thick] (X2) -- (F2);
  \draw[blue, thick] (R)  -- (F3);
  \draw[blue, thick] (X4) -- (F3);
  \draw[blue, thick] (X3) -- (F3);
  \draw[blue, thick] (F2) -- (F3);
  \draw[blue, thick] (F1) -- (F2);

  \node at ($(F3)!0.25!(R) +(-0.10, 0.09)$) {\small $e_{-1}$};
  \node at ($(F3)!0.25!(X4)+( 0.12, 0.06)$) {\small $e_0$};
  \node at ($(F3)!0.25!(X3)+( 0.2,-0.07)$) {\small $e_1$};
  \node at ($(F3)!0.25!(F2)+( 0.15, -0.05)$) {\small $e_2$};

  \fill[blue] (F1) circle (0.8pt);
  \fill[blue] (F2) circle (0.8pt);
  \fill[blue] (F3) circle (0.8pt);

  \fill[blue] (X0) circle (1pt);
  \fill[blue] (X1) circle (1pt);
  \fill[blue] (X2) circle (1pt);
  \fill[blue] (X3) circle (1pt);
  \fill[blue] (X4) circle (1pt);

  \fill[red] (R) circle (1.6pt);

  \foreach \i in {0,1,2,3,4,5} {
    \fill (V\i) circle (1.2pt);
  }

\end{tikzpicture}
\caption{}
\end{subfigure}
\begin{subfigure}[b]{0.35\linewidth}
    \begin{tikzpicture}[scale=1.8, line width=0.6pt,baseline=(current bounding box.north)]
 
  \foreach \i in {0,1,2,3,4,5,6,7} {
    \coordinate (V\i) at ({-67.5 + 45*\i}:1);
  }
    \draw[thick] (V0) -- (V1) -- (V2) -- (V3) -- (V4);
  \draw[red, ultra thick] (V4) -- (V5); 
  \draw[thick] (V5) -- (V6) -- (V7) -- (V0);
  
  \draw (V2) -- (V4);
  \draw (V2) -- (V6);

  \coordinate (F1) at (barycentric cs:V2=1,V4=1,V5=1,V6=1);
  \coordinate (F2) at (barycentric cs:V2=1,V3=1,V4=1);
  \coordinate (F3) at (barycentric cs:V0=1,V1=1,V2=1,V6=1,V7=1);

  \coordinate (L1) at ($(V3)!0.5!(V4)+(0, 0.30)$);   
  \coordinate (L2) at ($(V2)!0.5!(V3)+(0.27, 0.20)$);
  \coordinate (L3) at ($(V1)!0.5!(V2)+(0.30, 0)$);  
  \coordinate (L4) at ($(V0)!0.5!(V1)+(0.27,-0.20)$);
  \coordinate (L5) at ($(V7)!0.5!(V0)+(0,-0.30)$);   
  \coordinate (L6) at ($(V6)!0.5!(V7)+(-0.27,-0.20)$); 
  \coordinate (L7) at ($(V5)!0.5!(V6)+(-0.30, 0)$); 
  
  \coordinate (R) at ($(V4)!0.5!(V5)+(-0.27, 0.20)$);

  \draw[blue, thick] (R)  -- (F1);
  \draw[blue, thick] (F1) -- (F2);
  \draw[blue, thick] (F1) -- (F3);
  \draw[blue, thick] (F1) -- (L7);
  \draw[blue, thick] (F2) -- (L1);
  \draw[blue, thick] (F2) -- (L2);
  \draw[blue, thick] (F3) -- (L3);
  \draw[blue, thick] (F3) -- (L4);
  \draw[blue, thick] (F3) -- (L5);
  \draw[blue, thick] (F3) -- (L6);

  \node at ($(F1)!0.35!(R) +(0.14, 0.04)$)  {\small $e_{-1}$};
  \node at ($(F1)!0.40!(F2)+(0, 0.10)$)     {\small $e_0$};
  \node at ($(F1)!0.40!(F3)+(0.08, 0.08)$)  {\small $e_1$};
  \node at ($(F1)!0.40!(L7)+(0,-0.1)$)     {\small $e_2$};

  \fill[blue] (F1) circle (0.8pt);
  \fill[blue] (F2) circle (0.8pt);
  \fill[blue] (F3) circle (0.8pt);
  \fill[blue] (L1) circle (1pt);
  \fill[blue] (L2) circle (1pt);
  \fill[blue] (L3) circle (1pt);
  \fill[blue] (L4) circle (1pt);
  \fill[blue] (L5) circle (1pt);
  \fill[blue] (L6) circle (1pt);
  \fill[blue] (L7) circle (1pt);
  \fill[red] (R) circle (1.6pt);

  \foreach \i in {0,1,2,3,4,5,6,7} {
    \fill (V\i) circle (1.2pt);
  }
\end{tikzpicture}
\caption{}
\end{subfigure}

    \caption{(a) Construction of the string $s =$``(0)()()'' in $\Upsilon_{3,1}$ from a partial triangulation in $\Omega_{4,3}$: $s = (s(t_0)0s(t_1))s(t_2)$ where $s(t_0)=s(t_1)=\emptyset$ and $s(t_2)=$``()()''. (b) Construction of the string $s=$``(()0(00))'' in $\Upsilon_{3,3}$ from a partial triangulation in $\Omega_{6,3}$: $s = (s(t_0)0s(t_1))s(t_2)$ where $s(t_0)=$``()'', $s(t_1)=$``(00)'', and $s(t_2)=\emptyset$.}
    \label{fig:bij}
\end{figure}

\begin{lemma}\label{lem:bijection}
   For any $1\leq k\leq n$, there exists a bijection between $\Omega_{n,k}$ and $\Upsilon_{k,n-k}$.
\end{lemma}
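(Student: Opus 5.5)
The plan is to build the bijection recursively, as suggested by Figure~\ref{fig:bij}, and to verify it by strong induction on $n$. Label the vertices of the $(n+2)$-gon $V_0,\dots,V_{n+1}$ in counterclockwise order and fix a \emph{root side} $V_{n+1}V_0$. It is convenient to add a degenerate base case: a ``polygon'' with only $2$ vertices, i.e.\ a single edge ($n=0$). It has exactly one partial triangulation, the one with no parts, and we encode it by the empty string $\emptyset$.

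\textbf{The encoding map.} Let $\sigma\in\Omega_{n,k}$ with $n\ge1$, and let $F$ be the unique part of $\sigma$ containing the root side. Suppose $F$ has $r\ge 3$ sides. Its non-root sides, listed counterclockwise starting after the root, are $e_0,e_1,\dots,e_{r-2}$; the root side itself is $e_{-1}$. Each $e_i$ cuts off a sub-polygon $P_i$ on the side away from $F$. This $P_i$ is degenerate (a single edge) if $e_i$ is a side of the original polygon, and is a genuine polygon if $e_i$ is a diagonal. Let $t_i$ be the restriction of $\sigma$ to $P_i$, rooted at $e_i$. We define
\[
s(\sigma)=\big(\,s(t_0)\,0\,s(t_1)\,0\,\cdots\,0\,s(t_{r-3})\,\big)\,s(t_{r-2}),
\]
so $F$ contributes one pair of parentheses and $r-3$ zeros. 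The zeros lie inside that pair, as required.

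\textbf{Counting check.} We verify that $s(\sigma)\in\Upsilon_{k,n-k}$.
\begin{itemize}
\item Each part contributes exactly one pair of parentheses, giving $k$ pairs.
\item A part with $r_F$ sides contributes $r_F-3$ zeros. Summing over parts, each of the $k-1$ diagonals is counted twice and each of the $n+2$ sides once. So the number of zeros is $\sum_F (r_F-3) = (n+2)+2(k-1)-3k = n-k$.
\item Balancedness, the leading `(', and the zeros-inside-parentheses condition follow inductively, because each $s(t_i)$ is either empty or a valid string.
\end{itemize}
The same bookkeeping shows that a sub-polygon $P_i$ with $n_i+2$ vertices is sent to a string whose number of pairs plus number of zeros equals $n_i$. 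This identity is the invariant that drives the induction.

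\textbf{The inverse map.} Given a nonempty $u\in\Upsilon_{m,j}$, let $)$ be the parenthesis matching the initial `('. Then $u=(w)\,v$ with $w$ and $v$ balanced. Split $w$ at its zeros of nesting depth one (relative to the outer pair), giving $w=u_0\,0\,u_1\,0\cdots0\,u_{q}$. Each $u_i$, and also $v$, is either empty or again a string of the required type. The key points are that it cannot contain a top-level zero, and that it starts with `(' because any top-level symbol other than a zero is a parenthesis.

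We reconstruct $\sigma$ as follows. Set $r=q+3$ and $u_{r-2}:=v$. Let $n_i$ be the pairs-plus-zeros count of $u_i$. Place a root face of $r$ sides whose $i$-th non-root side spans $n_i+1$ consecutive polygon edges. The sizes match, since
\[
\sum_i (n_i+1) = (m-1)+(j-q)+(q+2) = m+j+1,
\]
which is the number of non-root sides of an $(m+j+2)$-gon. We then recursively insert the partial triangulations encoded by each $u_i$. Uniqueness of the decomposition (matching parenthesis and depth-one zeros) shows that decoding and encoding are mutually inverse, by induction on $n$.

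\textbf{Main obstacle.} The step I expect to need the most care is this inverse/uniqueness argument together with the degenerate-edge convention. Specifically, one must show that an empty $s(t_i)$ corresponds exactly to $e_i$ being a polygon side, and that the last subtree $t_{r-2}$ sits outside the parentheses without creating ambiguity. I would handle this by stating the induction hypothesis for all $n\ge0$, with the empty string included, so that the recursive pieces need no separate case analysis.
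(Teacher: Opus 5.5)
Your proposal is correct and follows essentially the same route as the paper. Both encode the root face as one pair of parentheses containing $|F|-3$ zeros, recurse on the pieces cut off by its non-root sides (with the last piece placed after the closing parenthesis), and invert by locating the matching parenthesis and the depth-one zeros. The differences are cosmetic: you induct on $n$ with an empty-string base case for a degenerate edge rather than on $k$ with base case $(0\cdots0)$, and you make explicit the side-length check $\sum_i(n_i+1)=m+j+1$ that the paper's inverse leaves implicit.
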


\begin{proof}
   Root the polygon at a boundary edge.
   The construction of the bijection from $\Omega_{n,k}$ to $\Upsilon_{k,n-k}$ goes via an induction on $k$. 
   For any $n$, if $k=1$ we have the string $(0_10_2\cdots 0_{n-1})$, where we number the zeros only to ease the explanation and to highlight the number of zeros in the string.   
   
   Assume $n\geq 2$ and $k>1$. Let $F$ be the face that contains the root edge, and let $|F|$ denote the number of sides of $F$. Order the edges of $F$ clockwise, so $e_{-1},e_0,\ldots,e_{|F|-2}$ are the edges with $e_{-1}$ being the root edge.
   Let $t_0,t_1,\ldots,t_{|F|-2}$ be the partial triangulations obtained on the other side of each edge (omitting the root edge), and let $t$ be the whole partial triangulation. Then, when $|F|>3$ we construct the string $s(t)$ as 
   $$
      s(t)=(\,s(t_0)\,0_1\,s(t_1)\,0_2\,s(t_2)\,\cdots\,s(t_{|F|-4})\, 0_{|F|-3}\,s(t_{|F|-3})\,)\,s(t_{|F|-2}).
   $$
   If $|F|=3$, we define $s(t)=(s(t_0))s(t_1)$.
   Since each $t_i$ has fewer than $k$ parts, we can apply induction to obtain $s(t_i)$. We set $s(t_i)$ to be the empty string when the other side of $e_i$ is the external face; see Figure~\ref{fig:bij}.
Note that each face $X$ contributes one pair of parentheses and $|X|-3$ zeros. Since the partial triangulation has $k-1$ diagonals, $\sum_X |X|=(n+2)+2(k-1)=n+2k$.
Thus $s(t)$ has $k$ pairs of parentheses and
$\sum_X(|X|-3)=n-k$
zeros, and hence $s(t)\in\Upsilon_{k,n-k}$.

   Now, from each string $s$ we construct a partial triangulation. Note that $s$ must start with '(', which corresponds to the face of the root edge. Then locate the corresponding ')', denote this matching pair $P$, 
   and locate also all the 0's that are surrounded by $P$
   (i.e., that are not surrounded by any other pair of parentheses contained in $P$). In other words, take each pair of open-close parentheses inside $P$ and remove them together with all symbols between them: one will be left with $P$ and a sequence of 
   0's $0_1,0_2,\ldots,0_q$ inside.
   Decompose the string as 
   $$
      s= (\,s_0\,0_1\,s_1\,0_2\,\cdots\,s_{q-1}\,0_q\, s_q\,)\,s_{q+1}.
   $$
   Then the face of the root edge has length $q+3$ and $s_i$ encodes the partial triangulation that lies at the other side of the $(i+1)$-th edge after the root edge in clockwise order.
   Since $s_i$ has fewer pairs of parentheses than $s$, by repeating this procedure we eventually get a string of the form $(0_10_2\cdots 0_q)$; i.e., without any other open-close parentheses inside. This string corresponds to a face of length $q+3$.
 The decomposition above is unique and reverses
the construction of $s(t)$; hence, the two constructions are the inverse of each other.
\end{proof}

With this bijection in hand, we focus on generating a string from $\Upsilon_{k,n-k}$ uniformly at random.
We work with a simple encoding of the strings in $\Upsilon_{k,n-k}$ that keeps the parentheses and then, in any position 
of the string where zeros are allowed, it puts a non-negative integer corresponding to the number of zeros in that position; this will allow us to design an algorithm with running time depending on $k$ instead of $n$.

Given such an encoding of a string in $\Upsilon_{k,n-k}$, one can generate the corresponding partial triangulation, represented by its
$k-1$ diagonals, in $O(k)$ time as follows.
First build a compressed version of the associated rooted tree from the string by scanning the encoding from left to right and interpreting '(' as creating the first child of a vertex, ')' as creating the last child, and each integer as the corresponding number of intermediate children (i.e., zeros in the original string); see Figure~\ref{fig:string-tree-triang}.
Leaves are stored implicitly, by each internal node
storing an integer representing its number of leaf children.
(Note that the preorder depth-first traversal of the tree recovers the string.) 
From the tree, a depth-first search traversal can then be used to generate the partial triangulation 
by adding a boundary edge each time an implicit leaf is reached, in cyclic order starting from the edge corresponding to the ``dummy'' root (which would also be a leaf if the tree were unrooted) and closing a face each time we are done exploring an internal tree vertex. The face is closed so that boundary edges corresponding to all unclosed leaves in the subtree are in the same face, and, except for the root face, the diagonal that closes it is recorded; see Figure~\ref{fig:string-tree-triang}.
Since the compressed tree has $O(k)$ explicit vertices and exactly $k-1$ diagonals are recorded, the traversal takes $O(k)$ time. 

It suffices then to provide 
an algorithm to sample an encoding from $\Upsilon_{k,n-k}$ uniformly at random. We describe our algorithm for this task for $k>1$ next.

\begin{figure}[t]
  \centering

  \begin{subfigure}[c]{0.20\linewidth}
    \centering
    \begin{tikzpicture}[
      every node/.style={font=\footnotesize},
      internal/.style={circle, draw, fill=blue!70, text=white, inner sep=1pt, minimum size=14pt, font=\footnotesize},
      leaf/.style={circle, draw, fill=gray!60, text=white, inner sep=1pt, minimum size=14pt, font=\scriptsize},
      rootnode/.style={circle, draw, fill=red!80, text=white, inner sep=1pt, minimum size=14pt, font=\footnotesize\bfseries},
      symlbl/.style={font=\scriptsize\ttfamily, inner sep=1pt, fill=white},
      scale=0.7
    ]
      \node[rootnode] (R) at (0, 4.5) {\tiny{$r$}};
      \node[internal] (F3) at (0, 3.3) {\tiny{$f_3$}};
      \node[leaf] (e1) at (-1.8, 2) {\tiny{$\ell_1$}};
      \node[leaf] (e2) at (-0.6, 2) {\tiny{$\ell_2$}};
      \node[internal] (F2) at (1.1, 2) {\tiny{$f_2$}};
      \node[leaf] (e3) at (0.3, 0.6) {\tiny{$\ell_3$}};
      \node[internal] (F1) at (2.0, 0.6) {\tiny{$f_1$}};
      \node[leaf] (e4) at (1.2, -0.8) {\tiny{$\ell_4$}};
      \node[leaf] (e5) at (2.8, -0.8) {\tiny{$\ell_5$}};
      \draw (R) -- (F3);
      \draw (F3) -- node[symlbl] {(} (e1);
      \draw (F3) -- node[symlbl] {0} (e2);
      \draw (F3) -- node[symlbl] {)} (F2);
      \draw (F2) -- node[symlbl] {(} (e3);
      \draw (F2) -- node[symlbl] {)} (F1);
      \draw (F1) -- node[symlbl] {(} (e4);
      \draw (F1) -- node[symlbl] {)} (e5);
    \end{tikzpicture}
    \caption{}
    \label{fig:enc:hex:tree}
  \end{subfigure}
  \begin{subfigure}[b]{0.18\linewidth}
    \centering
    \begin{tikzpicture}[scale=1.3, line width=0.5pt]
      \coordinate (V0) at (240:1);
      \coordinate (V1) at (300:1);
      \coordinate (V2) at (0:1);
      \coordinate (V3) at (60:1);
      \coordinate (V4) at (120:1);
      \coordinate (V5) at (180:1);
   
      \draw (V0) -- (V1);
      \draw (V1) -- (V2);
      \draw (V2) -- (V3);
      \draw (V3) -- (V4);
      \draw[red, ultra thick] (V4) -- (V5);
      \draw (V5) -- (V0);
   
      \draw (V1) -- (V5);
      \draw (V2) -- (V5);
     
      \node[font=\small] at ($(V3)!0.5!(V4) + (0, 0.20)$) {$\ell_1$};
      \node[font=\small] at ($(V2)!0.5!(V3) + (0.22, 0.12)$) {$\ell_2$};
      \node[font=\small] at ($(V1)!0.5!(V2) + (0.22, -0.12)$) {$\ell_3$};
      \node[font=\small] at ($(V5)!0.5!(V0) + (-0.22, -0.12)$) {$\ell_5$};
      \node[font=\small] at ($(V0)!0.5!(V1) + (0, -0.20)$) {$\ell_4$};
   
      \node[red, font=\small] at ($(V4)!0.5!(V5) + (-0.30, 0.18)$) {root};
   
      \node[blue!70!black, font=\small] at (barycentric cs:V0=1,V1=1,V5=1) {$f_1$};
      \node[blue!70!black, font=\small] at (barycentric cs:V1=1,V2=1,V5=1) {$f_2$};
      \node[blue!70!black, font=\small] at (barycentric cs:V2=1,V3=1,V4=1,V5=1) {$f_3$};
      \foreach \v in {V0,V1,V2,V3,V4,V5}{ \fill (\v) circle (0.8pt); }
    \end{tikzpicture}
    \caption{}
    \label{fig:enc:hex:triang}
  \end{subfigure}
  \begin{subfigure}[c]{0.33\linewidth}
    \centering
    \begin{tikzpicture}[
      every node/.style={font=\footnotesize},
      internal/.style={circle, draw, fill=blue!70, text=white, inner sep=1pt, minimum size=14pt, font=\footnotesize},
      leaf/.style={circle, draw, fill=gray!60, text=white, inner sep=1pt, minimum size=14pt, font=\scriptsize},
      rootnode/.style={circle, draw, fill=red!80, text=white, inner sep=1pt, minimum size=14pt, font=\footnotesize\bfseries},
      symlbl/.style={font=\scriptsize\ttfamily, inner sep=1pt, fill=white},
      scale=0.7
    ]
      \node[rootnode] (R) at (0, 6) {\tiny{$r$}};
      \node[internal] (F1) at (0, 4.7) {\tiny{$f_1$}};
      \node[internal] (F2) at (-4.0, 3.0) {\tiny{$f_2$}};
      \node[internal] (F3) at (0, 3.0) {\tiny{$f_3$}};
      \node[leaf] (e7) at (3.0, 3.0) {\tiny{$\ell_7$}};
      \node[leaf] (e1) at (-5.0, 1.0) {\tiny{$\ell_1$}};
      \node[leaf] (e2) at (-3.0, 1.0) {\tiny{$\ell_2$}};
      \node[leaf] (e3) at (-1.65, 1.0) {\tiny{$\ell_3$}};
      \node[leaf] (e4) at (-0.55, 1.0) {\tiny{$\ell_4$}};
      \node[leaf] (e5) at (0.55, 1.0) {\tiny{$\ell_5$}};
      \node[leaf] (e6) at (1.65, 1.0) {\tiny{$\ell_6$}};
      \draw (R) -- (F1);
      \draw (F1) -- node[symlbl, pos=0.55] {(} (F2);
      \draw (F1) -- node[symlbl, pos=0.6] {0} (F3);
      \draw (F1) -- node[symlbl, pos=0.55] {)} (e7);
      \draw (F2) -- node[symlbl] {(} (e1);
      \draw (F2) -- node[symlbl] {)} (e2);
      \draw (F3) -- node[symlbl, pos=0.6] {(} (e3);
      \draw (F3) -- node[symlbl, pos=0.6] {0} (e4);
      \draw (F3) -- node[symlbl, pos=0.6] {0} (e5);
      \draw (F3) -- node[symlbl, pos=0.6] {)} (e6);
    \end{tikzpicture}
    \caption{}
    \label{fig:enc:oct:tree}
  \end{subfigure}
   \begin{subfigure}[b]{0.15\linewidth}
    \centering
    \begin{tikzpicture}[scale=1.3, line width=0.5pt]
      
      \foreach \i in {0,1,2,3,4,5,6,7} {
        \coordinate (V\i) at ({-67.5 + 45*\i}:1);
      }
     
      \draw (V0) -- (V1);
      \draw (V1) -- (V2);
      \draw (V2) -- (V3);
      \draw (V3) -- (V4);
      \draw[red, ultra thick] (V4) -- (V5); 
      \draw (V5) -- (V6);
      \draw (V6) -- (V7);
      \draw (V7) -- (V0);
      \draw (V4) -- (V2);
      \draw (V6) -- (V2);
      \node[font=\small] at ($(V5)!0.5!(V6) + (-0.22, 0)$) {$\ell_7$};
      \node[font=\small] at ($(V6)!0.5!(V7) + (-0.20, -0.16)$) {$\ell_6$};
      \node[font=\small] at ($(V7)!0.5!(V0) + (0, -0.22)$) {$\ell_5$};
      \node[font=\small] at ($(V0)!0.5!(V1) + (0.20, -0.16)$) {$\ell_4$};
      \node[font=\small] at ($(V1)!0.5!(V2) + (0.22, 0)$) {$\ell_3$};
      \node[font=\small] at ($(V2)!0.5!(V3) + (0.20, 0.16)$) {$\ell_2$};
      \node[font=\small] at ($(V3)!0.5!(V4) + (0, 0.22)$) {$\ell_1$};
      \node[red, font=\small] at ($(V4)!0.5!(V5) + (-0.18, 0.20)$) {root};
     
      \node[blue!70!black, font=\small] at (barycentric cs:V3=1,V4=1,V5=1,V7=1) {$f_1$};
      \node[blue!70!black, font=\scriptsize] at ($(barycentric cs:V4=1,V3=1,V2=1)+(0.05,0.05)$) {$f_2$};
      \node[blue!70!black, font=\small] at ($(barycentric cs:V2=1,V3=1,V4=1,V5=1,V6=1)+(0.4,-0.8)$) {$f_3$};
      \foreach \v in {V0,V1,V2,V3,V4,V5,V6,V7}{ \fill (\v) circle (0.8pt); }
    \end{tikzpicture}
    \caption{}
    \label{fig:enc:oct:triang}
  \end{subfigure}

  \caption{Two examples illustrating the construction of a partial triangulation 
  from a string: (a)-(b) correspond to the string ``(0)()()''; (c)-(d) to the string ``(()0(00))''. The root edge of the polygon is marked red, and each tree edge is labeled with its corresponding string symbol. Each leaf $\ell_i$ in the tree 
  corresponds to a boundary edge of the polygon, in the order encountered by a DFS 
  traversal; leaves contribute boundary edges and internal vertices contribute faces when 
  their subtrees are fully explored.}
  \label{fig:string-tree-triang}
\end{figure}

\medskip
\RestyleAlgo{ruled}
\begin{algorithm}[H]
\caption{Uniform sampler for encodings of strings in $\Upsilon_{k,n-k}$}
\label{alg:fixed-k-sampler}
\medskip

\begin{minipage}{0.93\linewidth}
\begin{enumerate}[Step 1.]
\setlength{\itemsep}{0pt}
    \item Sample a balanced parenthesization $P$ with $k$ pairs of open-close parentheses uniformly at random by first
    sampling a full binary tree with $k$ internal nodes uniformly at random using R{\'e}my's algorithm~\cite{remy1985procede} in $O(k)$ time, and then generating the corresponding balanced parenthesization with $k$ pairs of open-close parentheses 
    via the bijection between
    these Catalan structures; see, e.g.,~\cite{Stanley_2015}. 
    
    \item Scan $P$ and compute $r(P)$, the number of admissible positions where zeros may be inserted; note that $r(P) \le 2k-1$.
    
    \item Set
$w(P)=\binom{n-k+r(P)-1}{r(P)-1}$ and $M=\binom{n+k-2}{2k-2}$.

    \item Accept $P$ with probability $\frac{w(P)}{M}$. If $P$ is rejected, return to Step $1$.
    
    \item Once a parenthesization $P$ is accepted, sample a weak composition
$
    (c_1,\ldots,c_{r(P)})
$
of $n-k$ into $r(P)$ non-negative parts uniformly at random.
    
    \item Output the encoding obtained by inserting the counts
$c_1,\ldots,c_{r(P)}$ into the admissible positions of $P$.
\end{enumerate}
\end{minipage}

\end{algorithm}
\medskip

We proceed to analyze the correctness and running time of this algorithm.

\begin{lemma}
\label{lemma:uniform:encoding}
Algorithm~\ref{alg:fixed-k-sampler} outputs the encoding of a uniform random string from
$\Upsilon_{k,n-k}$.
\end{lemma}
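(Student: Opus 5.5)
The plan is to show that the output is a uniformly random pair $(P,c)$, where $P$ is a balanced parenthesization with $k$ pairs and $c$ is a weak composition of $n-k$ into $r(P)$ parts. The first step is to check that this pair encoding is a bijection with $\Upsilon_{k,n-k}$. A string in $\Upsilon_{k,n-k}$ is determined by its underlying parenthesization $P$ together with the number of zeros placed in each admissible gap of $P$. By definition, a gap is admissible exactly when it lies strictly inside some matched pair. Hence
\[
|\Upsilon_{k,n-k}|=\sum_P w(P), \qquad w(P)=\binom{n-k+r(P)-1}{r(P)-1},
\]
the number of weak compositions of $n-k$ into $r(P)$ parts. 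Note that $r(P)\ge 1$, since the interior of the first pair is always admissible.

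Next I would verify that $w(P)/M\le 1$, so that Step 4 is a valid acceptance probability. The count $r(P)$ equals the $2k$ symbols minus one, minus the number of gaps at depth zero. Every $P$ starts with '(' and ends with ')', so the gaps at the two ends are never counted; this gives $r(P)\le 2k-1$. Since $\binom{n-k+r-1}{r-1}$ is nondecreasing in $r$ (because $n-k\ge 0$), we get $w(P)\le\binom{n+k-2}{2k-2}=M$. The bound is attained, for instance, by the fully nested parenthesization. One should also treat the edge case $n=k$, where $w(P)=M=1$ for every $P$.

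For the distribution argument, each iteration of Steps 1--4 draws $P$ uniformly from the $C_k$ parenthesizations; this uses R\'emy's algorithm together with the Catalan bijection. The iteration then accepts $P$ with probability $w(P)/M$. The iterations are independent, so conditioned on acceptance,
\[
\Pr[P=P_0]=\frac{w(P_0)/(C_kM)}{\sum_P w(P)/(C_kM)}=\frac{w(P_0)}{|\Upsilon_{k,n-k}|}.
\]
The acceptance probability per iteration is positive, so the procedure terminates almost surely. Step 5 then draws $c$ uniformly among the $w(P_0)$ compositions, and therefore each pair $(P_0,c)$ is output with probability $1/|\Upsilon_{k,n-k}|$. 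Through the bijection of the first step, this is exactly the claim.

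The main obstacle is the first step: pinning down precisely which gaps are admissible so that the correspondence between strings and pairs is exactly bijective. The key point is that zeros placed in the same gap are indistinguishable, so a string corresponds to a count per gap rather than to an ordering of zeros. The remaining steps are routine rejection-sampling bookkeeping. The uniformity of the weak composition sampler in Step 5 is taken as given, for example via stars and bars implemented with Floyd's subset sampler.
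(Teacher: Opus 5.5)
Your proposal is correct and follows the paper's proof. Both arguments compute the per-trial acceptance probability $\frac{1}{C_k}\cdot\frac{w(P)}{M}$, condition on acceptance to get $\Pr[P]=w(P)/\sum_{P'}w(P')$, and use the uniform choice among the $w(P)$ zero placements in Step 5 to obtain output probability $1/|\Upsilon_{k,n-k}|$. Your explicit check that $w(P)\le M$ (which the paper does in the running-time lemma) and your remarks on the string--composition correspondence and almost-sure termination are useful additions but do not change the argument.
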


\begin{proof}
Fix a balanced parenthesization $P$ with $k$ pairs of open-close parentheses, and recall that we use $r(P)$ to denote the number of admissible positions in $P$ where zeros can be inserted. The number of valid zero placements for $P$ is
\[
    w(P)=\binom{n-k+r(P)-1}{r(P)-1}.
\]
In a uniform sample from $\Upsilon_{k,n-k}$, the parenthesization $P$ must appear with
probability proportional to $w(P)$.  
In Algorithm~\ref{alg:fixed-k-sampler}, the probability that $P$ is generated and accepted in a given trial is
$\frac{1}{C_k}\cdot \frac{w(P)}{M}$. 
Summing over all balanced parenthesizations with $k$ pairs of open-close parentheses, the probability that a trial is
accepted is
\[
    \sum_{P'}
    \frac{1}{C_k}\cdot \frac{w(P')}{M}
    =
    \frac{1}{C_kM}\sum_{P'} w(P').
\]
Therefore, 
\[
    \Pr[P\mid \text{current trial accepted}]
    =
    \frac{
        \frac{1}{C_k}\cdot \frac{w(P)}{M}
    }{
        \frac{1}{C_kM}\sum_{P'}w(P')
    }
    =
    \frac{w(P)}{\sum_{P'}w(P')}.
\]
After a parenthesization $P$ is accepted, Step $5$ chooses one of the $w(P)$ valid zero placements uniformly at random. Thus, conditional on the accepted parenthesization being $P$, each encoding extending $P$ is chosen with probability $1/w(P)$.
Moreover, for an encoding $\sigma$ of a string in $\Upsilon_{k,n-k}$, 
let $P(\sigma)$ be its underlying balanced parenthesization. This parenthesization is uniquely determined by $\sigma$, by deleting the zeros and keeping only the parentheses. Therefore, for the algorithm to output $\sigma$, it must first accept $P(\sigma)$ and then choose in Step $5$ the zero placement that produces $\sigma$. Thus $\sigma$ is the output of the algorithm with probability:
\[
    \frac{w(P(\sigma))}{\sum_{P'}w(P')}
    \cdot
    \frac{1}{w(P(\sigma))}
    =
    \frac{1}{\sum_{P'}w(P')} = \frac{1}{|\Upsilon_{k,n-k}|},
\]
as claimed.
\end{proof}

\begin{lemma}
\label{lemma:alg2:rt}
Algorithm~\ref{alg:fixed-k-sampler} has expected running time $O(k\log k)$.
\end{lemma}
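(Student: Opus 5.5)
The plan is to bound the cost of a single trial of Steps~1--4, then the expected number of trials, and finally the cost of Steps~5--6.

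\emph{Cost of one trial.} R\'emy's algorithm and the Catalan bijection produce $P$ in $O(k)$ time, and scanning $P$ to compute $r(P)$ is also $O(k)$. We never compute $w(P)$ or $M$ separately. Since $(n+k-2)-(2k-2)=n-k$, we have $M=\binom{n-k+(2k-1)-1}{(2k-1)-1}$, which is exactly $w$ evaluated at $r=2k-1$. Hence
\[
\frac{w(P)}{M}=\prod_{j=r(P)}^{2k-2}\frac{j}{\,n-k+j\,},
\]
a product of at most $2k$ factors, each in $(0,1]$. It can be evaluated in $O(k)$ arithmetic operations, and one uniform real decides acceptance. So each trial costs $O(k)$.

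\emph{Number of trials.} By the proof of Lemma~\ref{lemma:uniform:encoding}, a trial is accepted with probability $\sum_{P'}w(P')/(C_kM)$. By Lemma~\ref{lem:bijection} this equals $|\Omega_{n,k}|/(C_kM)$. The number of trials is therefore geometric with mean $C_kM/|\Omega_{n,k}|$. We substitute $C_k=\binom{2k}{k}/(k+1)$, $M=\binom{n+k-2}{2k-2}$ and Lemma~\ref{lemma:countomega}, and cancel the factorials. The $(n-k)!$ terms cancel, and so do most of the terms in $k$. The result should be
\[
\frac{C_kM}{|\Omega_{n,k}|}=\frac{2(2k-1)\,n(n+1)}{(k+1)(n+k)(n+k-1)}.
\]
Since $2(2k-1)/(k+1)<4$, $n+1\le n+k$ and $n\le n+k-1$, this is at most $4$. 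So the expected number of trials is $O(1)$, and Steps~1--4 take $O(k)$ expected time in total.

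\emph{Steps 5--6, the main obstacle.} The difficulty is sampling the weak composition in time depending on $k$ rather than $n$. We use stars and bars: a uniform weak composition of $n-k$ into $r=r(P)$ parts corresponds to a uniform $(r-1)$-subset of $\{1,\dots,n-k+r-1\}$. Floyd's algorithm~\cite{bentley1987programming} draws this subset with $r-1=O(k)$ random draws and membership tests. Using a balanced search tree (or sorting at the end) gives $O(k\log k)$ time. We then sort the chosen positions in $O(k\log k)$ and take consecutive gaps minus one to get $c_1,\dots,c_r$. Inserting them into the admissible positions of $P$ is $O(k)$. This sorting step is the source of the logarithmic factor. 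Adding the three parts gives expected running time $O(k\log k)$.
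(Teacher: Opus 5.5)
Your proof is correct. Its outer structure matches the paper's: $O(k)$ work per trial via the product formula for $w(P)/M$, an $O(1)$ expected number of trials, and Floyd's algorithm plus sorting for Step~5. The genuine difference is how you show the number of trials is $O(1)$.

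You compute the acceptance probability exactly, combining $\sum_{P'}w(P')=|\Upsilon_{k,n-k}|=|\Omega_{n,k}|$ with the Kirkman--Cayley formula. Your cancellation is right: the expected number of trials is $\frac{2(2k-1)n(n+1)}{(k+1)(n+k)(n+k-1)}<4$.

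The paper avoids any enumeration. Since $\binom{n-k+r-1}{r-1}$ is increasing in $r$, we have $w(P)\le M$. Every parenthesization of the form $(\ldots)$ has $r(P)=2k-1$, hence $w(P)=M$, so it is accepted with probability one. Such parenthesizations make up a $C_{k-1}/C_k=\frac{k+1}{2(2k-1)}\ge\frac14$ fraction of R\'emy's outputs.

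The paper's argument is self-contained: it needs neither the bijection nor the cited count of $\Omega_{n,k}$, only monotonicity of $w$ in $r$ and a Catalan ratio. Yours instead gives the exact mean, which can be written as $\frac{C_k}{C_{k-1}}\cdot\frac{n(n+1)}{(n+k)(n+k-1)}$. This shows that the paper's bound $C_k/C_{k-1}$ is asymptotically attained when $k\ll n$, and that no rejection occurs at all when $k=n$.
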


\begin{proof}
R{\'e}my's algorithm runs in $O(k)$ time, so Step 1 takes $O(k)$ time. Likewise, Step 2 does a single scan, and so it also takes $O(k)$ time.
The acceptance probability $\frac{w(P)}{M}$ can  be computed in $O(k)$ time as well since letting $r = r(P)$ we have
\[
    \frac{w(P)}{M}
    =
    \frac{\binom{n-k+r-1}{r-1}}
         {\binom{n+k-2}{2k-2}}
    =
    \frac{\prod_{j=0}^{2k-2-r} (r+j)}{\prod_{j=0}^{2k-r-2} (n-k+r+j)},
\]
and then both the denominator and numerator can be computed with $O(k)$ arithmetic operations.

We show next that the acceptance probability is at least constant.   
Since for $r\ge 1$
\[
    \frac{\binom{n-k+r}{r}}{\binom{n-k+r-1}{r-1}}
    =
    \frac{n-k+r}{r}
    \ge 1,
\]
the binomial coefficient $\binom{n-k+r-1}{r-1}$
is increasing in $r$. Therefore,
\[
    w(P)\le M=\binom{n+k-2}{2k-2}.
\]
If $P$ has the form $(\ldots)$, 
then $r =  2k-1$ and $w(P)=\binom{n+k-2}{2k-2}=M$.
Therefore, every such balanced parenthesization is accepted with probability exactly $1$.
The number of $(\ldots)$ balanced parenthesizations is $C_{k-1}$, so the probability that a balanced parenthesization sampled uniformly at random has the
form $(\ldots)$ is 
\[
    \frac{C_{k-1}}{C_k}
    =
    \frac{k+1}{2(2k-1)} \ge \frac{1}{4}.
\]
Since every such balanced parenthesization is accepted, the acceptance probability in Step 4 is at
least $1/4$ and thus the expected number of trials is at most $4$.

For Step 5, 
after accepting a parenthesization $P$,
we sample a uniform random composition of $n-k$ elements (zeros) into $r$ non-negative parts. This can be done by selecting a subset of size $r-1$ uniformly at random among all subsets of that size in $\{1,\dots,n-k+r-1\}$. For this, we can use Floyd's algorithm~\cite{bentley1987programming} 
which produces a random subset 
$S\subseteq\{1,\ldots,n-k+r-1\}$ of size $r-1$ in $O(r \log r)$ time.
Sorting the elements of $S$ gives
$s_1<\cdots<s_{r-1}$ in $O(r\log r)$ time, and 
we set
$c_1=s_1-1$, $c_i=s_i-s_{i-1}-1$ for $2\le i\le r-1$, and 
$c_r=n-k+r-1-s_{r-1}$.
All combined, the expected running time of the algorithm is $O(k \log k)$.
\end{proof}

We conclude this section with the proof of our main result Theorem~\ref{thm:intro}.

\begin{proof}[Proof of Theorem~\ref{thm:intro}]
Our algorithm first runs Algorithm~\ref{alg:exact:k:sampler} to sample the number $X$ of diagonals, then runs Algorithm~\ref{alg:fixed-k-sampler} with $k=X+1$, and finally converts the resulting encoding into the corresponding partial triangulation using Lemma~\ref{lem:bijection}.  
The correctness of the algorithm follows from Lemmas~\ref{lemma:alg:exact:k:sampler:correctness}, \ref{lem:bijection}, and~\ref{lemma:uniform:encoding}.
The running time bound follows from Lemmas~\ref{lemma:alg1:rt}, \ref{lemma:expectation:bound}, and ~\ref{lemma:alg2:rt}.
\end{proof}

\section*{Acknowledgments}

This work was started at the AIM SQuaRE workshop ``Connections between computational and physical phase transitions.'' We thank Sarah Cannon, Tyler Helmuth, and Will Perkins for many helpful discussions.

\bibliographystyle{alpha}
\bibliography{references}

@article{cayley1890partitions,
  title={On the partitions of a polygon},
  author={A. Cayley},
  journal={Proceedings of the London Mathematical Society},
  volume={1},
  number={1},
  pages={237--264},
  year={1890},
  publisher={Oxford University Press}
}

@inbook{Stanley_2015, 
place={Cambridge}, 
title={Catalan Numbers}, 
publisher={Cambridge University Press}, 
author={Stanley, Richard P.}, 
year={2015}, 
pages={201–204}}

@article{remy1985procede,
  title={Un proc{\'e}d{\'e} it{\'e}ratif de d{\'e}nombrement d'arbres binaires et son application {\`a} leur g{\'e}n{\'e}ration al{\'e}atoire},
  author={R{\'e}my, Jean-Luc},
  journal={RAIRO. Informatique th{\'e}orique},
  volume={19},
  number={2},
  pages={179--195},
  year={1985},
  publisher={EDP Sciences}
}

@inproceedings{cannon2014phase,
  title={Phase Transitions in Random Dyadic Tilings and Rectangular Dissections},
  author={Cannon, Sarah and Miracle, Sarah and Randall, Dana},
  booktitle={Proceedings of the Twenty-Sixth Annual ACM-SIAM Symposium on Discrete Algorithms (SODA)},
  pages={1573--1589},
  year={2014},
  organization={SIAM}
}

@inproceedings{cannon2017polynomial,
  title={Polynomial Mixing of the Edge-Flip Markov Chain for Unbiased Dyadic Tilings},
  author={Cannon, Sarah and Levin, David A and Stauffer, Alexandre},
  booktitle={Proceedings of RANDOM},
  year={2017}
}

@article{angel2014phase,
  title={The phase transition for dyadic tilings},
  author={Angel, Omer and Holroyd, Alexander and Kozma, Gady and W{\"a}stlund, Johan and Winkler, Peter},
  journal={Transactions of the American Mathematical Society},
  volume={366},
  number={2},
  pages={1029--1046},
  year={2014}
}

@article{lagarias2002counting,
  title={Counting dyadic equipartitions of the unit square},
  author={Lagarias, Jeffrey C and Spencer, Joel H and Vinson, Jade P},
  journal={Discrete mathematics},
  volume={257},
  number={2-3},
  pages={481--499},
  year={2002},
  publisher={Elsevier}
}

@inproceedings{caputo2013random,
  title={Random lattice triangulations: structure and algorithms},
  author={Caputo, Pietro and Martinelli, Fabio and Sinclair, Alistair and Stauffer, Alexandre},
  booktitle={Proceedings of the Forty-fifth annual ACM Symposium on Theory of Computing (STOC)},
  pages={615--624},
  year={2013}
}

@article{stauffer2017lyapunov,
  title={A {Lyapunov} function for Glauber dynamics on lattice triangulations},
  author={Stauffer, Alexandre},
  journal={Probability Theory and Related Fields},
  volume={169},
  number={1},
  pages={469--521},
  year={2017},
  publisher={Springer}
}

@article{10.1214/16-EJP4321,
author = {Pietro Caputo and Fabio Martinelli and Alistair Sinclair and Alexandre Stauffer},
title = {{Dynamics of lattice triangulations on thin rectangles}},
volume = {21},
journal = {Electronic Journal of Probability},
number = {none},
publisher = {Institute of Mathematical Statistics and Bernoulli Society},
pages = {1 -- 22},
year = {2016},
doi = {10.1214/16-EJP4321},
URL = {https://doi.org/10.1214/16-EJP4321}
}

@book{de2010triangulations,
  title={Triangulations: structures for algorithms and applications},
  author={De Loera, Jes{\'u}s and Rambau, J{\"o}rg and Santos, Francisco},
  year={2010},
  publisher={Springer Science \& Business Media}
}

@article{ya1989real,
  title={Real plane algebraic curves: Constructions with controlled topology},
  author={Ya, Viro O},
  journal={Algebra i Analiz},
  volume={1},
  pages={1--73},
  year={1989}
}

@incollection{dais2002resolving,
  author    = {Dais, D. I.},
  title     = {Resolving 3-Dimensional Toric Singularities},
  booktitle = {Geometry of Toric Varieties},
  series    = {S{\'e}minaires et Congr{\`e}s},
  volume    = {6},
  pages     = {155--186},
  publisher = {Soci{\'e}t{\'e} Math{\'e}matique de France},
  address   = {Paris},
  year      = {2002},
  mrnumber  = {2075609}
}

@incollection{gelfand1994discriminants,
  title={A-discriminants},
  author={Gelfand, Israel M and Kapranov, Mikhail M and Zelevinsky, Andrei V},
  booktitle={Discriminants, resultants, and multidimensional determinants},
  pages={271--296},
  year={1994},
  publisher={Springer}
}

@article{mcshine1997mixing,
  title={On the mixing time of the triangulation walk and other Catalan structures.},
  author={McShine, Lisa and Tetali, Prasad},
  journal={Randomization methods in algorithm design},
  volume={43},
  pages={147--160},
  year={1997}
}

@inproceedings{eppstein2023improved,
  title={Improved Mixing for the Convex Polygon Triangulation Flip Walk},
  author={Eppstein, David and Frishberg, Daniel},
  booktitle={50th International Colloquium on Automata, Languages, and Programming (ICALP)},
  pages={56--1},
  year={2023}
}

@article{molloy1999mixing,
  title={On the mixing rate of the triangulation walk},
  author={Molloy, Michael and Reed, Bruce and Steiger, William},
  journal={DIMACS Series in Disc. Math. and Theoret. Comput. Sci},
  volume={43},
  pages={179--190},
  year={1999}
}

@inproceedings{alev2026faster,
  author={Alev, Vedat Levi and Frishberg, Daniel and Sarantis, Mihalis and Tetali, Prasad},
  title     = {Faster Mixing for Triangulations via Transport Flows},
  booktitle = {53rd International Colloquium on Automata, Languages, and Programming (ICALP)},
  year      = {2026}
}

@article{bentley1987programming,
  title={Programming pearls: a sample of brilliance},
  author={Bentley, Jon and Floyd, Bob},
  journal={Communications of the ACM},
  volume={30},
  number={9},
  pages={754--757},
  year={1987},
  publisher={ACM New York, NY, USA}
}

@book{lee2017subdivisions,
  title={Handbook of discrete and computational geometry},
  author={Toth, Csaba D and O'Rourke, Joseph and Goodman, Jacob E},
  year={2017},
  publisher={CRC press}
}

@article{lee1989associahedron,
  title={The associahedron and triangulations of the n-gon},
  author={Lee, Carl W},
  journal={European Journal of Combinatorics},
  volume={10},
  number={6},
  pages={551--560},
  year={1989},
  publisher={Elsevier}
}

@article{kirkman1857k,
  title={On the k-partitions of the r-gon and r-ace},
  author={Kirkman, Thomas P},
  journal={Phil. Trans. Royal Soc. London},
  volume={147},
  pages={217--272},
  year={1857}
}

@article{stanley1996polygon,
  title={Polygon dissections and standard Young tableaux},
  author={Stanley, Richard P},
  journal={journal of combinatorial theory, Series A},
  volume={76},
  number={1},
  pages={175--177},
  year={1996},
  publisher={Elsevier}
}

@article{przytycki1998polygon,
  title={Polygon Dissections and Euler, Fuss, Kirkman, and Cayley Numbers},
  author={Przytycki, J{\'o}zef H and Sikora, Adam S},
  journal={Journal of Combinatorial Theory, Series A},
  volume={92},
  number={1},
  pages={68--76},
  year={2000},
  publisher={Elsevier}
}

@article{aldous1994triangulating,
  title={Triangulating the circle, at random},
  author={Aldous, David},
  journal={The American Mathematical Monthly},
  volume={101},
  number={3},
  pages={223--233},
  year={1994},
  publisher={Taylor \& Francis}
}

\end{document}